\documentclass[11pt, a4paper]{article}
\usepackage[top=2.4cm, bottom=2.4cm, left=2.4cm, right=2.4cm, footnotesep=1cm]{geometry}
\usepackage{amsmath,amsfonts}
\usepackage{authblk}
\usepackage[colorlinks,linkcolor=black,anchorcolor=black,citecolor=black,urlcolor=black]{hyperref}
\usepackage{array}
\usepackage{textcomp}
\usepackage{amsthm}
\usepackage{subcaption}
\usepackage[shortlabels]{enumitem}
\usepackage{stfloats}
\usepackage{url}
\usepackage{verbatim}
\usepackage{graphicx}
\usepackage{palatino}
\usepackage{newpxtext}
\usepackage{newpxmath}
\usepackage[ruled,vlined,linesnumbered]{algorithm2e}
\usepackage{setspace}
\usepackage[titletoc,title]{appendix}

\usepackage{natbib}
 \bibpunct[, ]{(}{)}{,}{a}{}{,}%

\newcommand{\mca}[0]{\mathcal{A}}
\newcommand{\mcs}[0]{\mathcal{S}}
\newcommand{\mcp}[0]{\mathcal{P}}

\newcommand{\mcd}[0]{\mathcal{D}}

\let\Contentsline\contentsline
\renewcommand

\newtheorem{definition}{Definition}
\newtheorem{lemma}{Lemma}
\newtheorem{proposition}{Proposition}
\newtheorem{assumption}{Assumption}
\newtheorem{remark}{Remark}
\newtheorem{theorem}{Theorem}

\title{\begin{spacing}{1.2}From Observation to Intervention: Mastering Bus Holding Control from Historical Uncontrolled Trajectories with Offline Reinforcement Learning\end{spacing}}

\author[1]{Minyu Shen}
\author[2]{Xiaowen Bi}
\author[3]{Weihua Gu}
\author[4]{Zhandong Xu}
\author[5]{Yuezhong Wu}
\author[6]{Feng Xiao}
\author[6]{Lu Li}
\affil[1]{School of Management Science and Engineering, Southwestern University of Finance and Economics, Chengdu, China}
\affil[2]{Department of Statistics and Data Science, Beijing Normal University-Hong Kong Baptist University United International College, Zhuhai, China}
\affil[3]{Department of Electrical and Electronic Engineering, The Hong Kong Polytechnic University, Hong Kong, China}
\affil[4]{School of Transportation and Logistics, Southwest Jiaotong University, Chengdu, China}
\affil[5]{College of Computer and Data Science, Fuzhou University, Fuzhou, China}
\affil[6]{Business School, Sichuan University, Chengdu, China}
\date{}

\hypersetup{
  pdftitle={From Observation to Intervention: Mastering Bus Holding Control from Historical Uncontrolled Trajectories with Offline Reinforcement Learning},
  pdfauthor={Minyu Shen, Xiaowen Bi, Weihua Gu, Zhandong Xu, Yuezhong Wu, Feng Xiao, Lu Li},
  pdfsubject={Offline reinforcement learning for bus holding control},
  pdfkeywords={Bus operations, Urban computing, GPS trajectories, Offline reinforcement learning, Multi-agent reinforcement learning, Graph neural network}
}

\begin{document}
\maketitle
\onehalfspacing

\noindent \textbf{Abstract:} Bus operations are inherently unstable---buses initially dispatched at equal headways will inevitably end up bunching up. This issue exacerbates passenger waiting times, creates load imbalances between buses, and undermines service reliability. Reinforcement learning (RL) has proven highly effective for mitigating bus bunching and outperforms traditional model-based approaches. However, existing RL approaches require agents to continuously interact with the environment, which potentially disrupts the bus system before effective policies are learned. This paper proposes an Offline Graph-based Multi-agent Actor-Critic (Offline-GMAC) method that trains bus holding policies using only readily available logged GPS data collected during normal uncontrolled operations, thereby eliminating the need for disruptive online interaction or exploration. The method employs general graph neural networks for multi-agent credit assignment. A key innovation of our method is a data augmentation scheme that synthesizes unperformed holding actions from historical uncontrolled bus trajectories. The scheme exploits the equivalence between stochastic link travel time delays and deliberate holding interventions. Under mild assumptions, we prove that the 1-Wasserstein distance between synthesized and target state transition distributions is bounded above by a term that scales linearly with the standard deviation of link travel times plus a tunable synthesis-dependent term. Evaluations using real-world bus data demonstrate that the proposed offline method achieves performance comparable to, and in some cases, even better than its online counterpart.
\\
\textbf{Keywords:} Bus operations, Urban computing, GPS trajectories, Offline reinforcement learning, Multi-agent reinforcement learning, Graph neural network
\section{Introduction}

\subsection{Background}
Public transportation is a necessary component of the urban transportation system for promoting sustainable and equitable transport. Among the various modes of public transportation, buses are one of the most widely used due to their extensive reach and flexibility in connecting diverse and vast urban regions \citep{APTA2024}. In 2022, approximately 5,835 billion passenger kilometers were traveled by bus in urban areas worldwide, with this number expected to reach 10,993 billion passenger kilometers by 2050 \citep{statista_bus_demand_2023}. As urban populations continue to grow, high-quality bus services are becoming increasingly critical to improving the quality of life for city dwellers.\par

Despite their importance, bus services often face criticism for poor reliability, with bus bunching being a persistent global issue. In an ideal scenario, buses would arrive at stops according to planned regular headways, thereby efficiently serving passengers. However, in reality, various factors disrupt this plan. Random traffic events (e.g., road construction, traffic accidents, or sudden passenger influxes at certain stops) may delay buses. These delayed buses encounter more passengers at subsequent stops, leading to longer service times and further delays. Conversely, buses that arrive early at a stop meet fewer passengers, allowing them to travel faster and eventually catch up with delayed ones \citep{newell1964maintaining,daganzo2009headway}. This self-reinforcing cycle occurs pervasively in real bus systems, as shown by the time-space bus trajectories on a typical route in Chengdu, China in Fig.~\ref{fig:vis_example}. As illustrated in this figure, many buses that were initially dispatched at nearly even headways eventually bunch at downstream stops.

\begin{figure}[!ht]
    \centering
    \includegraphics[scale=0.32]{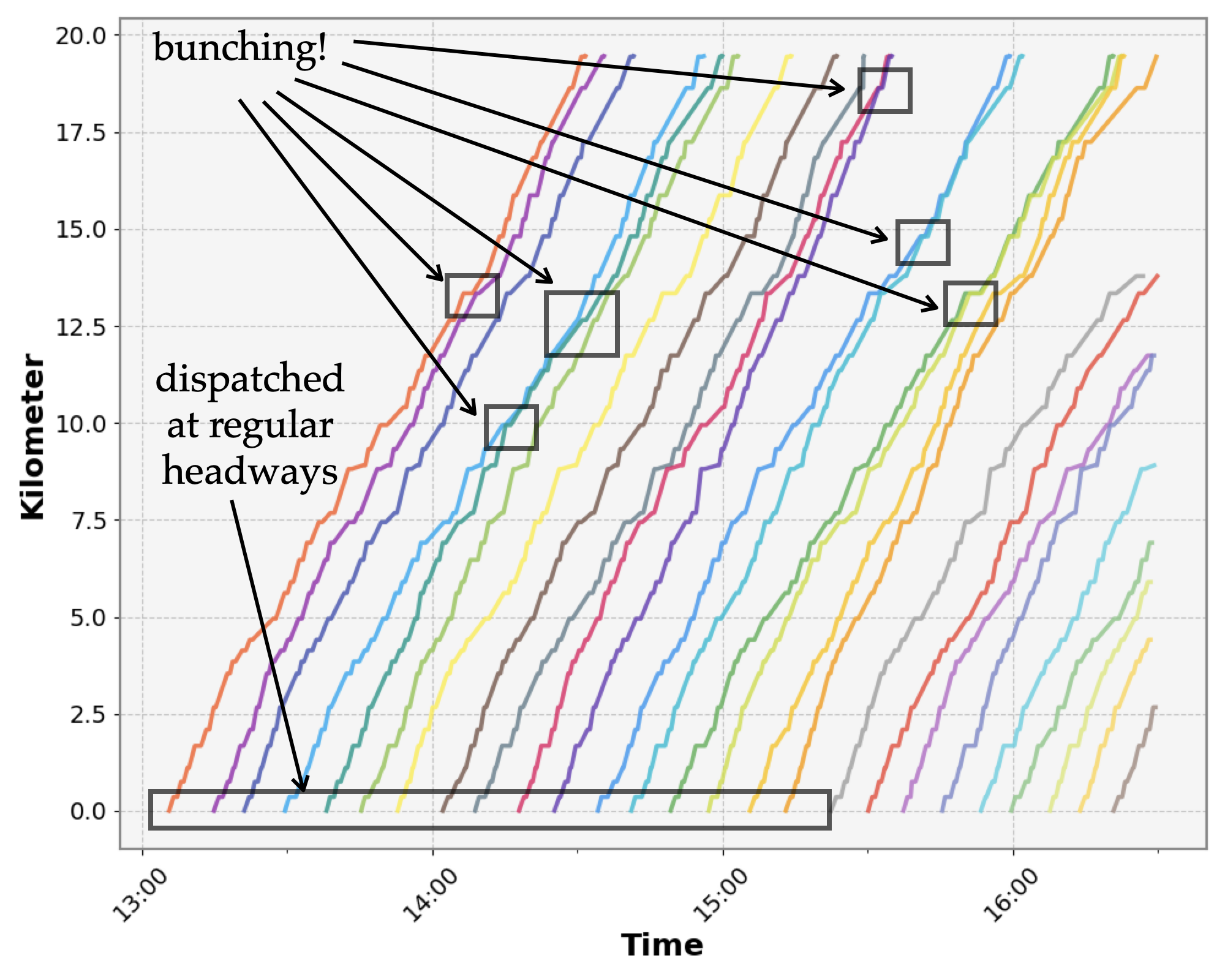}
    \caption{Time-Space diagram of bus trajectories of route 3 in Chengdu, China on March 8, 2021.}\label{fig:vis_example}
\end{figure}

Bus bunching leads to several critical issues: (i) \textit{Increased passenger waiting times}: Irregular headways between bus arrivals result in extended waits for passengers arriving during longer intervals, significantly increasing average waiting time \citep{osuna1972control}; (ii) \textit{Overcrowding}: Buses arriving after long headways become overcrowded, causing passenger discomfort and potentially preventing boarding on near-capacity buses; and (iii) \textit{Operational inefficiencies}: Bus bunching increases both dwell times and running times, reducing commercial speed and requiring larger terminal and stop areas for bus queuing \citep{verbich2016have}. These issues degrade service reliability and drive passengers to alternative transport modes.

Bus holding is the most widely adopted control strategy to combat bus bunching because of its operational simplicity \citep{newell1964maintaining}. The basic idea is that when the gap between two consecutive buses narrows, the following bus is intentionally held at a designated stop for a short period beyond its service time to restore the regular spacings. Although the holding operation itself is straightforward, implementing effective holding control in real-world scenarios is difficult due to: (i) complex, unpredictable environments with variable traffic conditions, passenger demand, and driver behaviors; (ii) interconnected nature of holding decisions where interventions ripple through the entire route; and (iii) need for rapid real-time decision-making in response to changing conditions.\par

\subsection{Motivations and challenges}
To address the above control difficulties, recent studies have focused on model-free reinforcement learning (RL) techniques to train bus holding control policies that outperform traditional model-based approaches in simulation environments \citep{alesiani2018reinforcement,wang2020dynamic,wang2021reducing}. While model-based methods struggle to accurately capture the complex dynamics of bus operations due to simplifying assumptions and difficulties in estimating time-varying parameters, RL methods can learn directly from environmental interactions without relying on predefined models. Current RL methods follow an online learning paradigm where agents iteratively experiment with various actions using the latest learned policy, then leverage that experience to refine the policy \citep{levine2020offline}. This direct interaction enables RL to adapt to changing conditions and learn complex relationships between states, actions, and outcomes that may be difficult to model explicitly.\par

However, despite the state-of-the-art performance achieved by online RL, it relies on online exploration, which is often impractical in real-world settings. Such exploration can lead to increased passenger complaints and operating costs before agents develop effective policies. To address this practical limitation, we propose leveraging historical bus GPS trajectory data routinely collected by transit agencies for RL training. These GPS logs typically provide sufficient granularity to extract bus locations at frequent intervals, as well as the arrival and departure times at each stop for most trips. These data repositories contain valuable information about bus operations under various conditions.\par

Traditionally, such trajectory data have been utilized primarily for performance measurement and travel time prediction \citep{li2023sequence}. However, their application for RL training remains largely unexplored and faces three significant challenges: (i) \textit{Missing action labels}: Historical GPS trajectories lack information on holding actions, as they are collected without implemented controls. Inferring the effects of different actions without environment interaction is crucial for obtaining the state-action-reward pairs necessary for RL training; (ii) \textit{Asynchronous multi-agent interactions}: Bus holding actions occur asynchronously at different stops but collectively influence the state transitions of all buses in the system. Effectively capturing these complex interactions from historical data is challenging; and (iii) \textit{Offline hyperparameter selection}: Historical data does not provide an execution environment, which makes it difficult to evaluate policies under different hyperparameters. Identifying optimal hyperparameter configurations without real environment interaction is therefore also challenging.\par

We propose to address these challenges by transforming historical uncontrolled trajectories into optimal control policies through the following pipeline (Fig.~\ref{fig:overall_pipeline}). First, to overcome the absence of action labels, we exploit a fundamental equivalence between stochastic travel time deviations and deliberate holding interventions. In uncontrolled operations, certain spontaneous events (e.g., extra dwell times due to passenger crowding or experienced drivers’ speed adjustments) can incidentally improve headway regularity, mimicking the effects of intentional holds. By treating these stochastic delays as implicit holding actions, we synthesize unperformed holding actions from historical uncontrolled GPS data. This process yields the essential state-action-reward-next-state transition tuples for training. Second, to capture asynchronous multi-agent interactions, we construct upstream and downstream event graphs that encode spatiotemporal dependencies among buses, enabling a global critic to estimate the system-wide impact of local decisions via message passing. Third, leveraging these synthesized tuples and event graphs, we develop an offline RL training scheme to learn optimal holding strategies. This pipeline bridges the gap between uncontrolled historical observations and decision-oriented control policies.

\subsection{Contributions of this paper}
In this paper, we propose a data-driven, offline multi-agent RL approach for learning optimal bus holding policies directly from uncontrolled bus trajectory logs. (Code and data are available at: \url{https://github.com/Minyu-Shen/OGMAC}.) Our major contributions are:
\begin{enumerate}
	\item We propose an Offline Graph-based Multi-agent Actor-Critic (Offline-GMAC) method, marking the first work to train a bus holding policy using only logged GPS data, without directly interacting with the environment. Compared to the existing online RL methods, the key advantage of our offline approach is that it avoids potentially harmful real-world trials for training purposes, while still leveraging the strengths of model-free RL.
	\item We propose a data augmentation scheme based on the insight that stochastic link travel time delays are functionally equivalent to deliberate holds. Exploiting this equivalence, the scheme ``synthesizes'' \textit{unperformed} holding actions from historical GPS trajectories of \textit{uncontrolled} bus motions. To validate this scheme, we prove that under mild assumptions, the 1-Wasserstein distance between the synthesized state transition distribution (from uncontrolled trajectories) and the desired distribution (from actual holding control) is bounded above by a term that scales with the standard deviation of link travel times plus a tunable synthesis-dependent term. The scheme eliminates the need for real-world training data under various control actions, which are very costly to gather, and removes the necessity of building a simulator as a surrogate model for collecting training data.
	\item We propose an offline hyperparameter selection (OHS) approach that utilizes the critic value over offline training iterations as the scalar statistic for selecting optimal hyperparameter configurations. This method can identify high-performing policies from a set of policy candidates trained using various hyperparameters, requiring only historical uncontrolled data as input.
	\item Experiments using real data from a bus route in Chengdu, China, show that Offline-GMAC can learn a policy that performs on par with---and in some cases surpasses---a finely-tuned online RL policy using under half a month's worth of historical, uncontrolled bus trajectories. This finding unlocks the vast potential of historical bus trajectory data, which has been largely unused for real-time control in current research and practices.

\end{enumerate}

Section \ref{sec:related_work} summarizes the related work regarding bus holding strategies. Section \ref{sec:problem_description} describes the problem setup and the asynchronous issue of the bus holding problem. Section \ref{sec:method} formulates the bus holding problem as a multi-agent RL problem and proposes the Offline-GMAC method together with the training data synthesis scheme. Section \ref{sec:experiments} examines the performance of the proposed method via numerical experiments based on real-world bus line operation data. Matters on the real-world deployment of the Offline-GMAC are discussed in Section \ref{sec:deployment}. Conclusions are summarized in Section \ref{sec:conclusion}.

\section{Literature review on bus holding strategies}\label{sec:related_work}
The literature on bus holding can be categorized into two main streams: model-based holding strategies and model-free, RL-based holding strategies. Model-based strategies are devised and analyzed using mathematical models, which are formulated to characterize the essential operating factors and dynamics of bus systems. RL-based methods, on the other hand, learn an optimal control policy through iterative trial-and-error interactions with the environment, without requiring an explicit model of the environment.

\subsection{Model-based holding strategies}
Earlier works on bus holding have focused on schedule-based holding strategies where fixed slacks were inserted into bus schedules and buses were required to depart on time at pre-determined control points \citep{osuna1972control,newell1974control,zhao2006optimal}.\par

Since the early 2000s, with the development of GPS tracking systems on public transit vehicles, it has become easier to efficiently collect real-time data on bus locations and speeds. This technological advance facilitates the implementation of real-time adaptive holding strategies in practice. \citet{daganzo2009headway} introduced a dynamic holding strategy applied at multiple control points where holding times were determined using the bus headways measured in real-time. This work was later extended to a variety of adaptive holding strategies \citep{daganzo2011reducing,xuan2011dynamic,bartholdi2012self,he2015anti,andres2017predictive,zhang2018two,dai2019predictive}. Works in this realm often model the system by difference equations incorporating random noise, and they conduct analytical assessments of the stability of the proposed holding strategy. To maintain analytical tractability, simplifying assumptions have been adopted, preventing these models from incorporating realistic elements like bus capacity constraints, random passenger arrivals, diverse passenger behaviors, and signalized intersections.\par

In addition to the aforementioned adaptive methods, a significant body of research has proposed rolling-horizon-based methods, also known as model-predictive control (MPC) methods. These methods involve repeatedly solving an optimization problem over a finite time horizon, aiming to optimize a performance measure while adhering to constraints on the system's state and control inputs. Common objectives include minimizing expected passenger waiting time or bus headway variations. For instance, \citet{hickman2001analytic} presents an analytical method based on a stochastic transit service model to determine the optimal bus holding time at a control stop. \citet{eberlein2001holding} formulate the control problem as a rolling-horizon quadratic programming problem and find the single-point holding time that minimizes passenger waiting time. \citet{zhao2003distributed} propose a cooperative holding control algorithm that assumes communication between vehicles and stops to reduce passenger waiting time. \citet{delgado2012much} consider more realistic operational details, such as heterogeneous bus fleets and limited bus capacity, and propose a control method that combines skip-stop and holding strategies. \citet{berrebi2015real} develop an analytical method to determine holding times that minimize headway variations based on the prediction of future bus arrival times. \citet{hernandez2015analysis} develop control schemes based on holding strategies for a corridor with multiple bus lines. They consider scenarios where a central operator aims to maximize the service level for all bus lines, as well as situations where each bus line operates independently. \citet{seman2019headway} develop a mathematical programming approach to find optimal holding times that minimize passengers' travel times for a corridor shared by multiple lines.

\subsection{Model-free RL-based holding strategies}
Model-based methods struggle to accurately capture real-world bus operations. Their simplifying assumptions and difficulties in estimating time-varying parameters like origin-destination (OD) demand, inter-stop travel times, and dwell times can compromise control effectiveness in actual deployment.\par

To address these issues, recent studies propose model-free RL methods to mitigate bus bunching. RL methods learn optimal policies through direct interaction with the bus operating environment, thereby eliminating the need for explicit system models. These approaches iteratively refine policies by collecting and learning from interactions with the environment \citep{sutton2018reinforcement}. For example, \citet{chen2016real} present a multi-agent Q-learning architecture to optimize the holding times for coordinated bus operations. \citet{alesiani2018reinforcement} propose an RL model for solving the multi-agent bus-holding problem where the agents share a common Q-function and a target headway is specified. \citet{wang2020dynamic} model the holding problem as a Markov game and design a joint action tracker to better quantify the interactions among multiple agents. \citet{wang2021reducing} recognize the asynchronous nature of bus holding decisions and extend the classical actor-critic architecture by introducing an inductive critic that evaluates the impacts of asynchronous actions from other agents. \citet{he2022dynamic} propose an adaptive holding method combining approximate dynamic programming (ADP) and a multistage look-ahead mechanism to determine the holding time. The incorporation of the latter facilitates a faster and smoother transition through the algorithm's initial unstable phase. \citet{shen2024extensible} developed an extensible open-source simulation platform to facilitate the systematic development and benchmarking of different bus holding strategies. \citet{xu2025reinforcement} conducted a comprehensive and systematic evaluation of various reinforcement learning configurations to examine their control performance for mitigating bus bunching. These studies generally report that online RL methods outperform the traditional ones. In a recent study, \citet{wang2023multi} develop an efficient multi-agent RL framework to learn bus holding control in a corridor shared by two bus lines.\par

However, the nature of RL also presents its own set of challenges. RL depends on a trial-and-error approach, where agents iteratively try various actions by interacting with the environment with the latest learned policy and then leverage that experience to improve the policy \citep{levine2020offline}. However, in real-world bus holding control, this method of online exploration is often impractical. It can lead to an increase in passenger complaints and operating costs before the agents develop effective policies. This limitation partially explains why RL methods find their greatest success in areas where high-fidelity simulators are available, such as Atari games \citep{mnih2013playing}, Go \citep{silver2017mastering}, and Dota 2 \citep{berner2019dota}. These environments allow for extensive exploration and learning without the real-world repercussions associated with online training. Unfortunately, such a high-fidelity simulator does not yet exist for complex bus systems.

\section{Problem description}\label{sec:problem_description}
Consider a one-way loop corridor that spans a distance of $X$ units. The corridor includes a terminal stop and is served by a bus fleet of size $M$, $B\equiv\{ b_1,b_2,\ldots,b_M \}$; see Fig.~\ref{fig:corridor_layout}. Buses are labelled sequentially with bus $b_{j\oplus 1}$ following $b_{j}$ and $b_{j\ominus 1}$ preceding $b_{j}$. Here, the ``o-plus'' and ``o-minus'' notations denote addition and subtraction modulo $M$ for the bus number index, respectively. Each bus $b_j\in B$ operates in a loop along the corridor, sequentially visiting a set of $N$ stops, $C\equiv\{c_1,c_2,\ldots,c_N\}$. The stops are numbered sequentially along the corridor, starting from the terminal stop. When a bus reaches the terminal stop, it either immediately begins a new loop or undergoes a layover time. An inter-stop link connecting two stops $c_k$ and $c_{k\oplus 1}$ is denoted as $l_k,l_k\in L\equiv \{l_1,l_2,\ldots,l_N\}$, where $L$ is the link set. In this context, the ``o-plus'' and ``o-minus'' notations perform addition and subtraction modulo $N$, respectively.
\begin{figure}[!ht]
    \centering
    \includegraphics[scale=0.28]{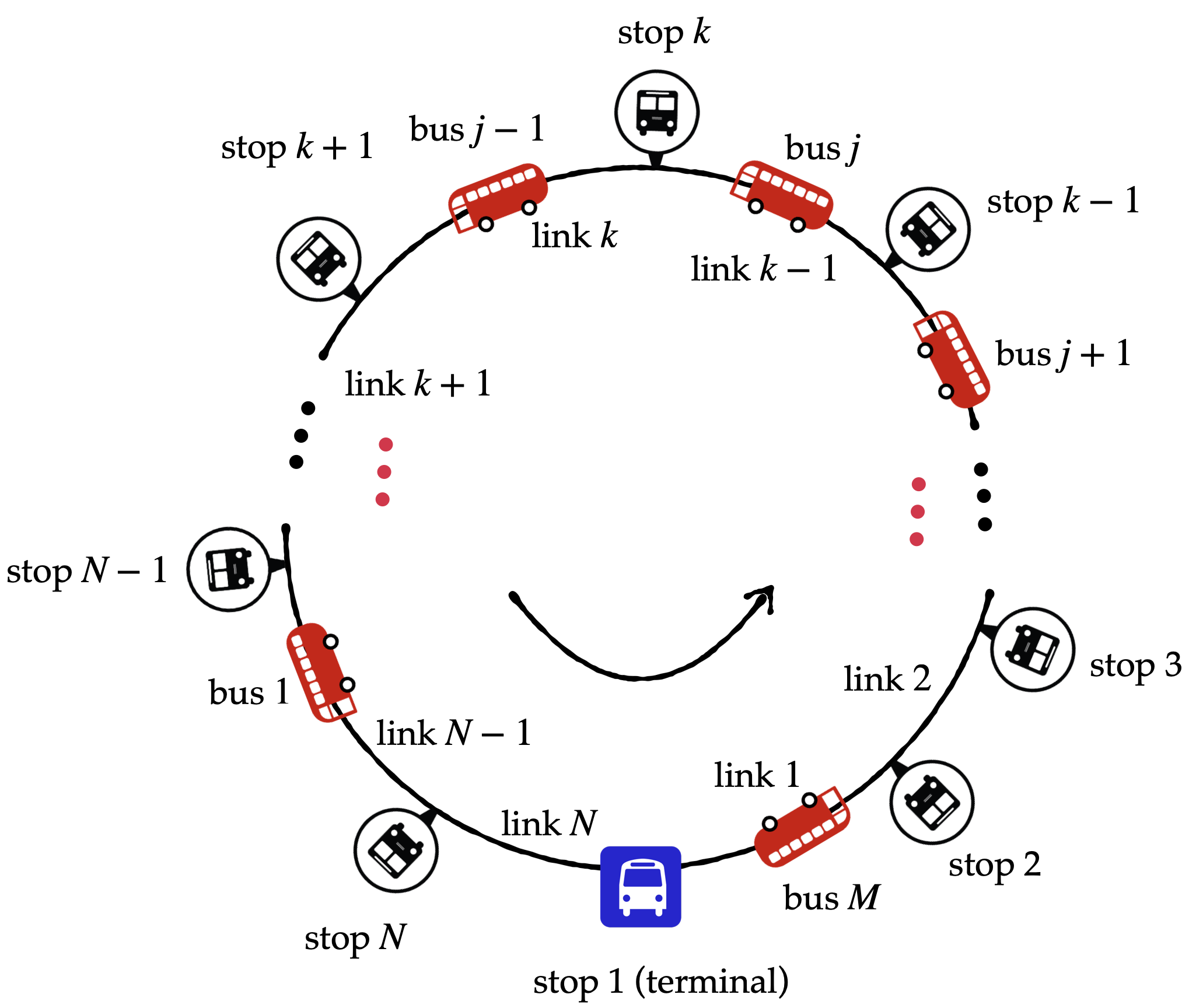}
    \caption{Layout of a one-way loop corridor.}\label{fig:corridor_layout}
\end{figure}\par

Bus motion tends to become unstable over time, leading to uneven bus spacing and bunching. Bus holding is an effective solution to this problem. In this paper, we propose a cooperative multi-agent RL framework to develop bus holding policies. Each agent represents an individual bus, and they cooperate to hold in order to maintain even spacings as much as possible. To leverage the most up-to-date information, we determine the holding control when buses depart from bus stops, in contrast to previous studies that determined control when buses arrive at stops (e.g., \citealp{daganzo2009headway} and \citealp{wang2020dynamic}). \citet{gu2025nonlinear} showed that determining the holding time when a bus is ready to depart consistently reduces headway variability for the same average holding delay, compared to strategies that make the decision upon the bus's arrival. Additionally, the holding control can be implemented by slowing down buses on links, catering to impatient passengers who may find it difficult to tolerate holding delays.\par

Holding control for a bus fleet is a typical application of asynchronous multi-agent control, because the bus agents rarely implement actions at the same time \citep{wang2021reducing}. To illustrate, Fig.~\ref{fig:ts_illustrate} depicts a time-space diagram of three consecutive bus trajectories under holding. In the figure, $t^*$ represents the decision time for bus $b_j$ to determine its holding time when departing from stop $c_k$. Notably, at $t^*$, neither the preceding bus $b_{j\ominus 1}$, dwelling at stop $c_{k\oplus 2}$, nor the succeeding bus $b_{j\oplus 1}$, cruising along link $l_{k\ominus 2}$, are at their respective decision times.
\begin{figure}[!ht]
    \centering
    \includegraphics[scale=0.28]{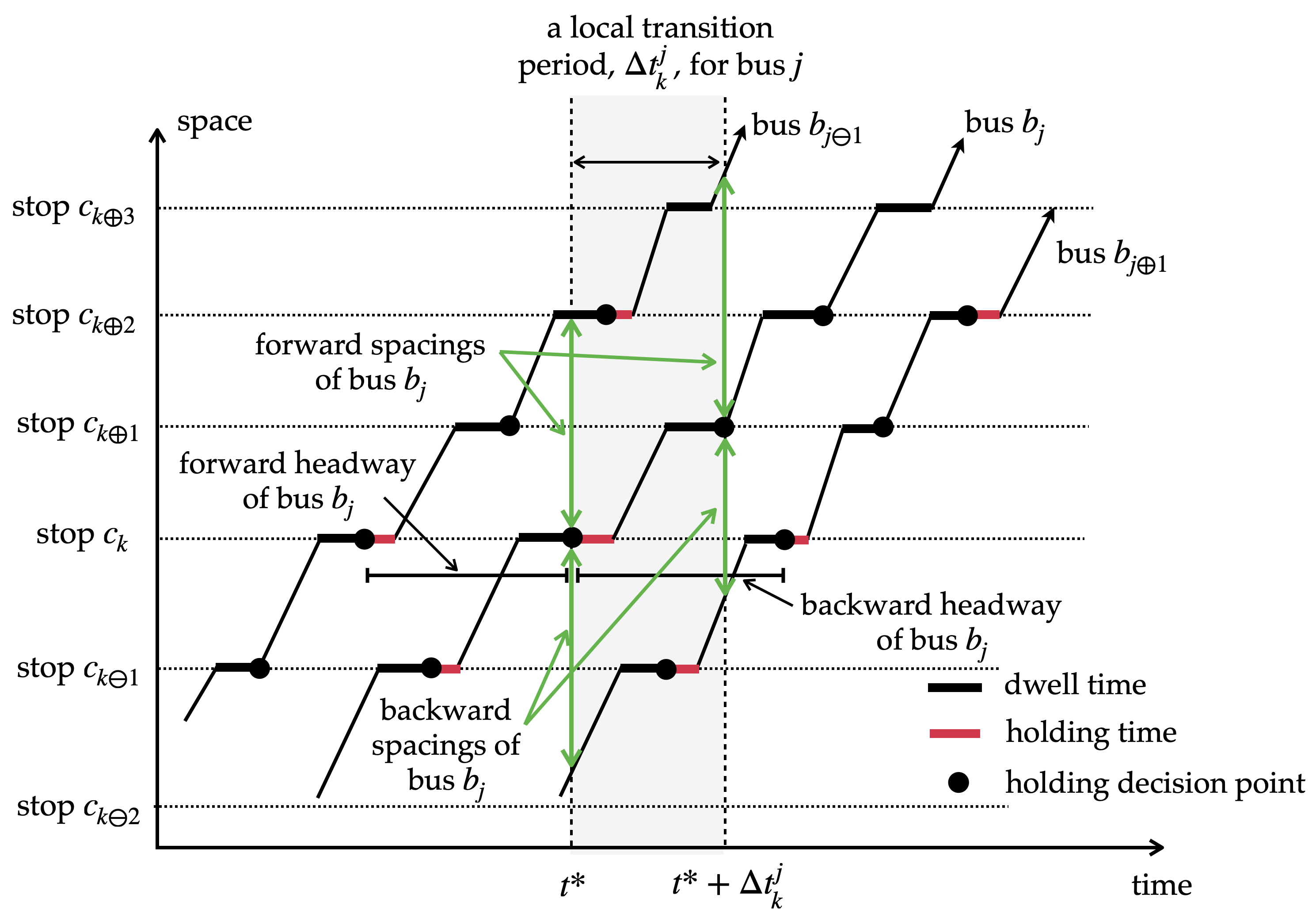}
    \caption{Time-space diagram illustration of bus holding.}\label{fig:ts_illustrate}
\end{figure}\par

While the actions of other agents may not occur simultaneously, they collectively influence both the state transition of the current agent and the overall performance of the system. For instance, in Fig.~\ref{fig:ts_illustrate}, the subsequent spacing between buses $b_{j\ominus 1}$ and $b_j$ (a state variable for bus agent $b_j$), measured at the next stop $c_{k\oplus 1}$, is affected by both the current action of bus $b_j$ at stop $c_k$ and the action taken by bus $b_{j\ominus 1}$ at stop $c_{k\oplus 2}$.\par

Inspired by the idea of graph attention neural network in \citet{wang2021reducing}, we develop a more general multi-agent RL approach capable of leveraging diverse graph neural network (GNN) architectures to effectively handle asynchrony and capture inter-agent interactions. More importantly, our proposed \textit{offline} multi-agent RL method can eliminate the need for real-time interactions with the environment during the training process, as it solely relies on historical uncontrolled bus data to learn an optimal holding policy.\par

\section{Offline Graph-based Multi-agent Actor-Critic Method}\label{sec:method}
In this section, we introduce an Offline Graph-based Multi-agent Actor-Critic (Offline-GMAC) method for learning an optimal holding policy from a historical bus trajectory dataset. Our method offers an end-to-end solution that eliminates the need for constructing a high-fidelity simulator for training purposes. The method consists of two key components: a data preparation process and an off-policy multi-agent training scheme. The data preparation process transforms historical uncontrolled trajectories into graph-based structures, resulting in usable transition tuples. These tuples are then utilized in the off-policy multi-agent training scheme to learn optimal policies. Fig. \ref{fig:overall_pipeline} illustrates the overall pipeline and data flow of our method, while Fig. \ref{fig:network_architecture} showcases the neural network architectures used for approximating value functions. The RL model involves an actor that makes holding decisions based on local states, a local critic that evaluates the actor's performance, and a global critic that utilizes a GNN to estimate the impact of holding events from all the other agents.\par

Section \ref{sec:rl_elements} describes the state, action, local transition dynamics, and reward designs. The policy and value function approximations using neural networks are presented in Section \ref{sec:value_functions}. Section \ref{sec:training_scheme} illustrates the training scheme for learning the optimal policy from historical data. Crucially, we demonstrate the process of synthesizing uncontrolled bus trajectories into controlled training data in Section \ref{sec:data_synthesis}. Theoretical justifications for this synthesis scheme are provided in Section \ref{sec:theoretical_justification_data_synthesis}.

\subsection{RL basic elements design}\label{sec:rl_elements}
We begin by defining the local state transition of each bus agent $b_j$, from the departure at stop $c_k$ to the subsequent departure at the next stop $c_{k\oplus 1}$: $\widetilde{\mcp}^j: \mcs^j \times \mca^j \mapsto \mcs^j$, where $\mcs^j$ denotes the local state space and $\mca^j$ represents the action space. An actor selects an action based on the local state, following a holding policy. Furthermore, a local critic assesses the quality of the action by considering the local state and the actor's chosen action. Note that the above local state transition and bus agent $b_j$'s action selection do not yet consider the impacts of other agents' actions. The consideration of multi-agent interactions will be facilitated by the global event critic, as discussed in Section \ref{sec:global_critic}.\par

\subsubsection{Local state}
With an onboard GPS device or other location-tracking technologies, bus $b_j$'s location at each time $t$, denoted by $x^{j}_{t}$, can be easily tracked. When bus $b_j$ departs from stop $c_k$ at time $t$, its local state $s^j_{k,t}$ consists of the forward spacing $f^{j}_{k,t}\equiv (x^{j\ominus 1}_t \ominus x^{j}_{t})/X$ (representing the spacing between buses $b_j$ and $b_{j\ominus 1}$, normalized by the corridor length $X$), the backward spacing $b^{j}_{k,t}\equiv (x^{j}_{t} \ominus x^{j\oplus 1}_t)/X$ (between buses $b_j$ and $b_{j\oplus 1}$), and the current departure stop ID, $k$. The observed spacings are depicted by the green vertical segment lines in Fig.~\ref{fig:ts_illustrate}. Note that when applying the ``o-minus'' notation on bus positions, it signifies the operation of subtraction modulo $X$. The last state variable is included to account for the heterogeneity across bus stops and links.\par

In our approach, we utilize spacings as state variables instead of the commonly-used headways in model-based studies (e.g., \citealp{daganzo2009headway}). This is because the forward headway of bus $b_j$ (as shown in Fig.~\ref{fig:ts_illustrate}) does not reflect the up-to-date position of bus $b_{j\ominus 1}$. Meanwhile, the backward headway (also depicted in Fig.~\ref{fig:ts_illustrate}) cannot be accurately measured at present. Conversely, the forward and backward spacings of bus $b_j$ reflect the accurate and up-to-date states about the three consecutive buses, $b_{j\ominus 1}$, $b_j$, and $b_{j\oplus 1}$, at a specific time $t$.\par

Our method aims to learn an optimal policy directly from historical bus GPS trajectories. In other words, the RL elements will be extracted from the GPS data. Hence, variables that are not included in the GPS data, such as the number of passengers on board \citep{wang2021reducing} and the number of passengers waiting at the stop \citep{wang2020dynamic}, are not included in the system state.
\begin{figure*}[!ht]
    \centering
    \includegraphics[width=\textwidth]{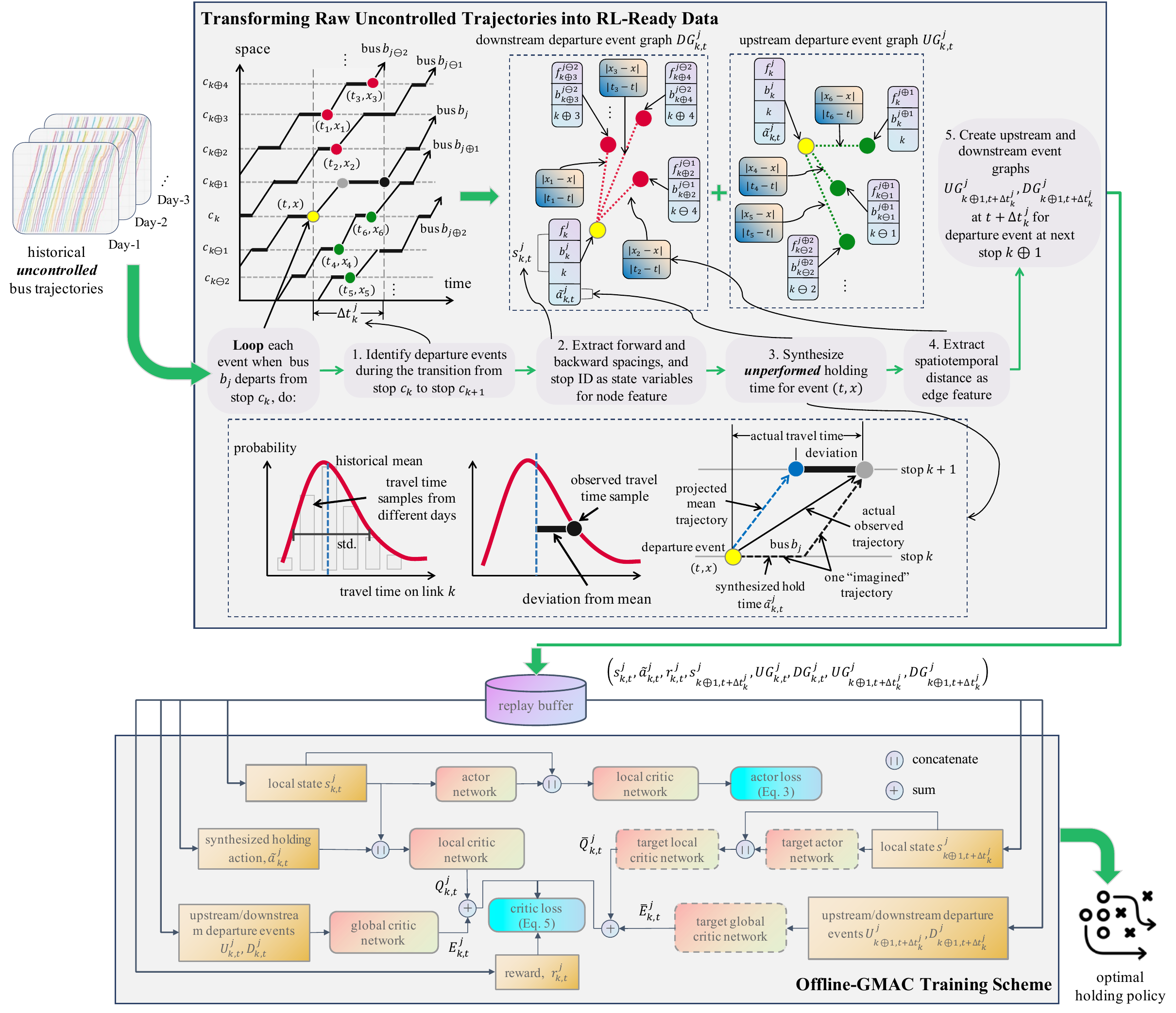}
    \caption{From uncontrolled data to optimized holding policies: Offline-GMAC data preparation and training.}
    \label{fig:overall_pipeline}
\end{figure*}

\subsubsection{Action}
The holding time of bus $b_j$ upon departure from stop $c_k$ at time $t$, denoted as $h_{k,t}^{j}$, is computed as:
\begin{equation}
	h_{k,t}^{j} = a_{k,t}^{j}\cdot W,
\end{equation}
where $a_{k,t}^{j} \in [0,1]$ is a continuous decision variable, representing the holding intensity determined by the agent; and $W$ denotes the maximum holding time pre-defined by the transit agency, indicating the tolerance for holding delay. When $a_{k,t}^{j}=0$, no holding is implemented.

\subsubsection{Local state transition dynamics}
Upon departing from stop $c_k$ at time $t$, bus $b_j$ observes a state $s^j_{k,t}$ and takes an action $a^j_{k,t}$. The bus then undergoes a random time interval $\Delta t^j_k$ before its next departure event at stop $c_{k\oplus 1}$ occurs, when it receives the subsequent observation $s^j_{k\oplus 1,t+\Delta t^j_k}$. The transition of the state can be represented by a conditional probability $p( s_{k\oplus 1,t+\Delta t^j_k}^j \mid s_{k,t}^j, h_{k,t}^j )$. The duration $\Delta t^j_k$ encompasses the holding time $h^j_{k,t}$, the stochastic travel time on link $l_k$, and the stochastic dwell time at stop $c_{k\oplus 1}$.

\subsubsection{Global shared reward}
We formulate the bus holding control as a cooperative multi-agent RL problem, where bus agents collaborate to enhance overall operational efficiency. To facilitate cooperation, the training of the actor is centralized, with all buses' local states and actions observed, and a shared global reward is defined. The reception of the reward for action $a_{k,t}^{j}$ is deferred until bus $b_j$ departs from stop $c_{k\oplus 1}$ at time $t+\Delta t^j_k$. The global shared reward is defined as follows:
\begin{equation}\label{eq:reward}
	r^j_{k,t} = -\sigma_{b_j\in B}[|x_{t+\Delta t^j_k}^{j\ominus 1} \ominus x_{t+\Delta t^j_k}^{j} | ] - w\cdot a_{k,t}^{j},
\end{equation}
where $\sigma[\cdot]$ represents the standard deviation operator across all buses, and $w$ is a balancing parameter that controls the tradeoff between system stability and holding intensity. The first term encourages buses to maintain even spacings, while the second term penalizes holding time.

\subsection{Policy and value functions}\label{sec:value_functions}
This subsection describes how the actor and critics are approximated by neural networks. 

\subsubsection{Actor network \texorpdfstring{$\pi_{\theta^j}$}{pi (theta j)}}
The actor network $\pi_{\theta^j}$, parametrized by $\theta^j$, receives the local state $s^j_{k,t}$ as input and produces a deterministic continuous action $a^j_{k,t}=\pi( s^j_{k,t};\theta^j )$ as output. Specifically, the forward spacing $f^j_{k,t}$ and backward spacing $b^j_{k,t}$ are fed into a Multilayer Perceptron (MLP) to generate a hidden vector of size $\kappa$. Additionally, the stop ID is encoded using an embedding layer to create another hidden vector of size $\kappa$. These two hidden vectors are then elementwise added and fed into another MLP to generate the final action. By incorporating the stop ID in this manner, we aim to capture diverse characteristics of bus stops and links, like heterogeneous passenger arrival rates and travel time distributions.
\begin{figure*}[!ht]
\centering
\subfloat[]{\includegraphics[width=0.235\textwidth]{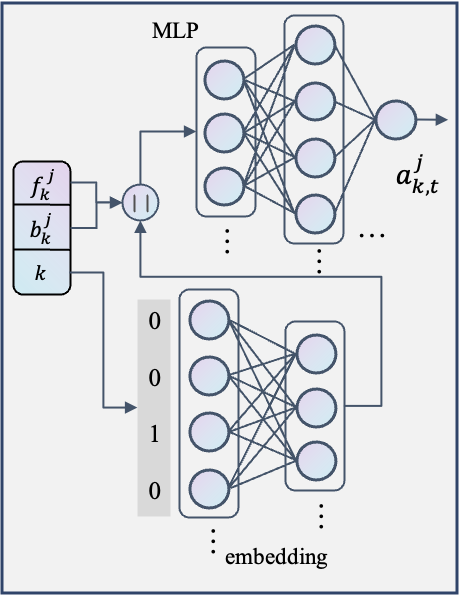}%
}
\hfil
\subfloat[]{\includegraphics[width=0.25\textwidth]{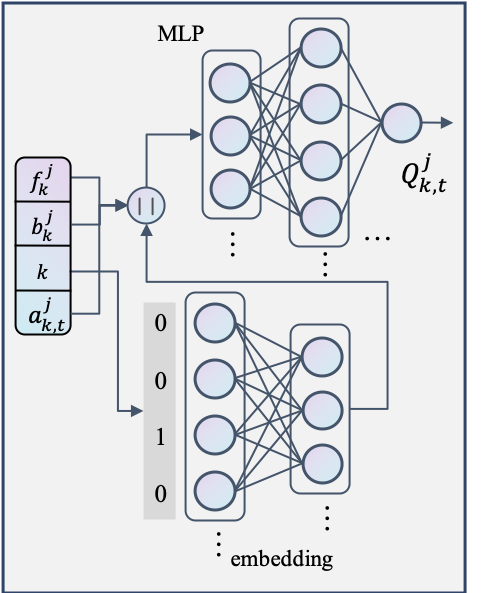}%
}
\hfil
\subfloat[]{\includegraphics[width=0.485\textwidth]{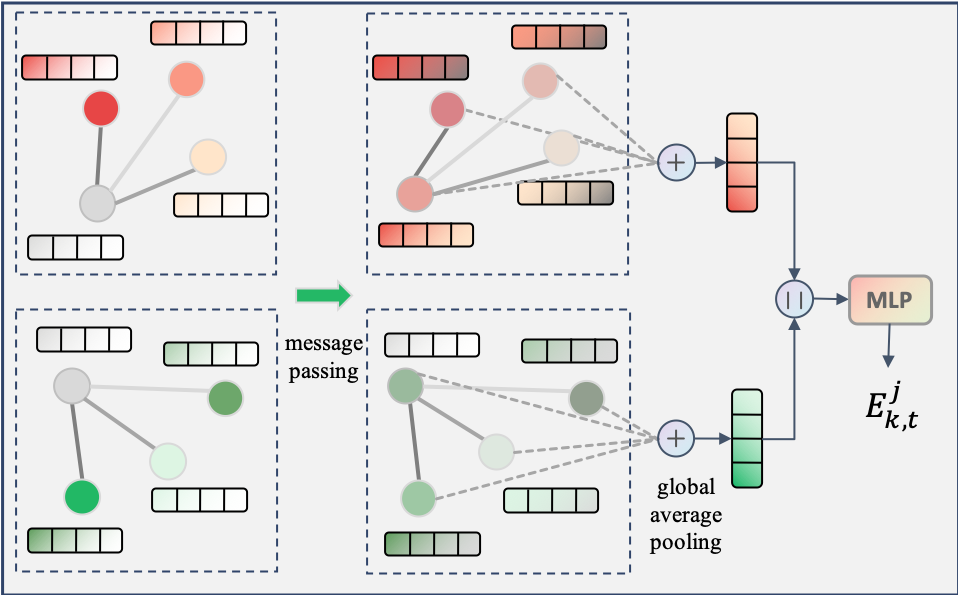}%
}
\caption{Neural network architectures. (a) Actor network. (b) Local critic network. (c) Global critic network.}
\label{fig:network_architecture}
\end{figure*}

\subsubsection{Local critic network \texorpdfstring{$Q_{\phi^j}$}{Q (phi j)}}
The local critic network $Q_{\phi^j}$, parametrized by $\phi^j$, takes the local state $s^j_{k,t}$ and action $a^j_{k,t}$ of agent $b_j$ as inputs and outputs a scalar value $Q( s^j_{k,t},a^j_{k,t};\phi^j )$, which estimates the expected return (i.e., the discounted cumulative sum of rewards) of conducting $a^j_{k,t}$ under $s^j_{k,t}$ and then following a policy $\pi_{\theta^j}$ for future holding decisions. The treatment of the local state part is similar to that of the actor network (see Fig.~\ref{fig:network_architecture}). The only difference is that the forward spacing $f^j_{k,t}$ and backward spacing $b^j_{k,t}$ are first concatenated with the chosen action $a^j_{k,t}$ and then fed into the MLP.

\subsubsection{Global event graph critic network \texorpdfstring{$E_{\varphi}$}{E (varphi)}}\label{sec:global_critic}
The local critic network only evaluates the individual agent's contribution to the global reward, disregarding the impacts of other agents. However, as depicted in Fig.~\ref{fig:ts_illustrate}, when bus $b_j$ at stop $c_k$ makes a holding decision at time $t^*$, its forward spacing is also influenced by the action taken by bus $b_{j\ominus 1}$. The effect on bus $b_j$'s forward spacing subsequently affects the received global reward defined in Eq.~\eqref{eq:reward}. To effectively account for the influence of actions from other agents on the current bus $b_j$, we introduce a global event graph critic as follows.\par

We first aggregate all the departure events and their corresponding actions that take place upstream and downstream of the current event during a single local transition period of bus $b_j$. Specifically, we construct the upstream event set, $U^j_{k,t}$, by gathering departure events occurring in the upstream half-circle of the loop corridor using stop $c_k$ as the reference point. The remaining departure events that occur in the downstream half-circle of the loop corridor are consolidated into the downstream event set, $D^j_{k,t}$. Formally, $U^j_{k,t} \equiv \{ (b_{j'},c_{k'},t')\vert b_{j'}\neq b_{j}, t'\in (t,t+\Delta t^j_k),k'\ominus k < \frac{N}{2}, \forall b_{j'}\in B \}$ and $D^j_{k,t} \equiv \{ (b_{j'},c_{k'},t')\vert b_{j'}\neq b_j, t'\in (t,t+\Delta t^j_k),k'\ominus k \geq \frac{N}{2}, \forall b_{j'}\in B \}$, where a tuple $(b_{j'},c_{k'},t')$ denotes an event of bus $b_{j'}$ departing from stop $c_{k'}$ at time $t'$ and $B$ is the bus fleet set. Note that this half-circle split provides complete fleet coverage with respect to bus $b_j$ and lets the GNN itself learn the interaction structure from data, rather than relying on a prescribed fraction that requires additional tuning.\par

Next, we construct an upstream event graph, $UG^j_{k,t}$, and a downstream event graph, $DG^j_{k,t}$, based on $U^j_{k,t}$ and $D^j_{k,t}$, respectively. In these graphs, each node represents a departure event and includes the local state observation and the corresponding action as the node attributes. The current departure event is connected to every other departure event in a star graph configuration. The edge attributes capture the spatial and temporal distances between pairs of departure events.\par

As a concrete example, consider Fig.~\ref{fig:ts_illustrate}: at the decision time $t^*$ of bus $b_j$ departing from stop $c_k$, the departure of the preceding bus $b_{j\ominus 1}$ from stop $c_{k\oplus 2}$ constitutes a node of the upstream event graph $UG^j_{k,t}$, with the local state observation of $b_{j\ominus 1}$ at that moment and its holding time at $c_{k\oplus 2}$ serving as the node attributes.

To collectively consider the impacts of all other actions, we employ a GNN to perform message passing in the downstream and upstream event graphs. This approach involves exchanging messages between nodes, allowing the GNN to effectively aggregate information regarding the effects of different actions on the global reward. Following the message passing, the two node embedding vectors of the current event $(b_j,c_k,t)$ from the upstream and downstream event graphs are element-wise summed. The resulting vector is then fed into an MLP to obtain the final global critic $E( UG^j_{k,t},DG^j_{k,t};\varphi )$ for the event $(b_j,c_k,t)$. It is noteworthy that various message-passing-based GNNs, such as graph convolutional networks (GCNs) and graph attention networks (GATs), can be employed in this framework \citep{jin2023spatio}. As a result, the method proposed in \citet{wang2021reducing} that uses the GAT to perform the message passing can be viewed as a special case of our proposed method.\par

In sum, the global event graph critic takes the upstream event graph $UG^j_{k,t}$ and the downstream event graph $DG^j_{k,t}$ as inputs, and outputs a scalar value, $E( UG^j_{k,t},DG^j_{k,t};\varphi )$, estimating the expected return contributed by the other agents' departure events that occur during one local transition period for the bus agent $b_j$.

\subsection{Training scheme}\label{sec:training_scheme}
Training of the actor $\pi_{\theta^j}$ is to adjust its parameters $\theta^j$ so as to maximize the value dictated by the local critic $Q_{\phi^j}$. The loss function of the actor is given as:
\begin{equation}\label{eq:ddpg_policy_objective}
	\mathcal{J}(\theta^j) = -{\mathbb{E}} {[ Q(s_{k,t}^j, \pi (s_{k,t}^j;{\theta^j} );{\phi^j} ) ]}.
\end{equation}\par

The actual expected return after $b_j$ implementing $a^j_{k,t}$ at time $t$ is approximated using the sum of the outputs from the local and global critics:
\begin{equation}\label{eq:sum_critics}
Z^j_{k,t}( s^{j}_{k,t},a^{j}_{k,t},UG^j_{k,t},DG^j_{k,t}; \phi^j, \varphi ) = Q( s^{j}_{k,t},a^{j}_{k,t};\phi^j ) + E( UG^j_{k,t},DG^j_{k,t};\varphi ).
\end{equation}\par

To train $Q_{\phi^j}$ and $E_{\varphi}$ concurrently, we adopt a bootstrap approach to minimize the mean-squared Bellman error function:
\begin{equation}\label{eq:ddpg_critic_objective}
\mathcal{L}(\phi^j,\varphi) = {\mathbb{E}} [\frac{1}{2} ( Z^j_{k,t}( s^{j}_{k,t},a^{j}_{k,t},UG^j_{k,t},DG^j_{k,t}; \phi,\varphi ) -r^{j}_{k,t}-\gamma \bar{Z}^j_{k\oplus 1,t+\Delta t^j_k} )^2 ],
\end{equation}
where $\gamma$ is the discounting factor and $\bar{Z}^j_{k\oplus 1,t+\Delta t^j_k}$ denotes the expected return when bus $b_j$ departs from the next stop $c_{k\oplus 1}$ at time $t+\Delta t^j_k$. $\bar{Z}^j_{k\oplus 1,t+\Delta t^j_k}$ is calculated by summing the outputs from the two critics at the next decision step:
\begin{equation}
	\bar{Z}^j_{k\oplus 1,t+\Delta t^j_k} = \bar{Q} ( s^{j}_{k\oplus 1,t+\Delta t^j_k},\bar{\pi} ( s^{j}_{k\oplus 1,t+\Delta t^j_k};\bar{\theta}^j ) ;\bar{\phi}^j) + \bar{E}( UG^j_{k\oplus 1,t+\Delta t^j_k},DG^j_{k\oplus 1,t+\Delta t^j_k};\bar{\varphi}^j ),
\end{equation}
where $\bar{Q}_{\bar{\phi}^j}$, $\bar{\pi}_{\bar{\theta}^j}$ and $\bar{E}_{\bar{\varphi}^j}$ are the target networks that trace the main networks $Q_{\phi^j}$, $\pi_{{\theta^j}}$ and $E_{\varphi}$ over the course of training. They help to stabilize the training process and prevent oscillations and divergence in the critic values. They are updated periodically (once per main network update) using a Polyak averaging technique:
\begin{equation}
	\{\bar{\theta}^j,\bar{\phi}^j,\bar{\varphi}\} \gets \tau \{\bar{\theta}^j,\bar{\phi}^j,\bar{\varphi}\} + (1-\tau)\{\theta^j,\phi^j,\varphi\},
\end{equation}
where $\tau$ is a hyperparameter between 0 and 1 that determines the weight assigned to the previous target network parameters.\par

During the training process, the expectations $\mathbb{E}[\cdot]$ in Eqs.~\eqref{eq:ddpg_policy_objective} and \eqref{eq:ddpg_critic_objective} are approximated using the Monte Carlo method with $\epsilon$-size batches of transition pairs sampled from the replay buffer $\mcd$, i.e., $\{ (s^j_{k,t},a^j_{k,t},r^j_{k,t},UG^j_{k,t},DG^j_{k,t},s^j_{k\oplus 1,t+\Delta t^j_k},UG^j_{k\oplus 1,t+\Delta t^j_k},DG^j_{k\oplus 1,t+\Delta t^j_k} ) \}_{i=1}^{\epsilon} \sim \mcd$.\par

These transition pairs within the replay buffer can be collected by any interacting behavior policy. Fig.~\ref{fig:overall_pipeline} illustrates the data preparation process when the behavior policy corresponds to the uncontrolled agent; i.e., when the training is conducted using uncontrolled bus data. (Details on this matter will be presented in Section \ref{sec:data_synthesis}.)\par

Given the sampled transition pairs, the following stochastic gradients of the loss functions \eqref{eq:ddpg_policy_objective} and \eqref{eq:ddpg_critic_objective} can be computed:
\begin{equation}
	\nabla_{\theta^j} \mathcal{J}(\theta^j) = -\frac{1}{\epsilon}\sum_{i=1}^{\epsilon} {[ \nabla_{\theta^j}\pi ( s_{k,t}^j;\theta^j ) } \cdot { \nabla_{\pi( s_{k,t}^j;{\theta^j} )}Q( s_{k,t}^j,\pi(s_{k,t}^j;{\theta^j});{\phi^j} ) ]},
\end{equation}
\begin{align}
	\nabla_{\phi^j} \mathcal{L}(\phi^j,\varphi) &= \frac{1}{\epsilon}\sum_{i=1}^{\epsilon}{[( Q( s^{j}_{k,t},a^{j}_{k,t};\phi^j ) + E( UG^j_{k,t},DG^j_{k,t};\varphi )} \nonumber \\ 
	&- {r^j_{k,t}-\gamma \bar{Z}^j_{k\oplus 1,t+\Delta t^j_k}  ) \cdot \nabla_{\phi^j}Q( s^{j}_{k,t},a^{j}_{k,t};\phi^j )]},
\end{align}
and
\begin{align}
	\nabla_\varphi \mathcal{L}(\phi^j,\varphi) &= \frac{1}{\epsilon}\sum_{i=1}^{\epsilon}{[( Q( s^{j}_{k,t},a^{j}_{k,t};\phi^j ) + E( UG^j_{k,t},DG^j_{k,t};\varphi )} \nonumber \\
	&-{r^j_{k,t}-\gamma \bar{Z}^j_{k\oplus 1,t+\Delta t^j_k}) \cdot \nabla_{\varphi}E( UG^j_{k,t},DG^j_{k,t};\varphi )]}.
\end{align}

The above gradients can then be used for updating the network parameters with a learning rate $\alpha$.\par

In our training scheme, we adopt the parameter-sharing technique commonly used in RL-based bus holding literature \citep{alesiani2018reinforcement,wang2020dynamic}. This means that all bus agents $b_j\in B$ share a common set of parameters for the actor, local critic, and global critic. This approach enables training the policy using the collective experience of all agents simultaneously, thereby reducing the data requirements for training. Note that the parameter sharing would not compromise the agents' ability to exhibit distinct behavior due to the different observations received by each agent. (We empirically validate this design choice through an ablation against a non-shared variant, where each bus maintains its own actor and local critic. Even when trained on $5\times$ more data, the non-shared variant still fails to match the performance of the shared architecture.) The training algorithm with parameter sharing is summarized in Algorithm \ref{algr:training}. Interested readers can refer to the source code for more details.

\begin{algorithm}[!ht]
\textbf{Input:} A total of $T$ historical transition pairs collected by any interacting behavior policies, $\{ (s^j_{k,t},a^j_{k,t},r^j_{k,t}$, $UG^j_{k,t},DG^j_{k,t},s^j_{k\oplus 1,t+\Delta t^j_k},UG^j_{k\oplus 1,t+\Delta t^j_k},DG^j_{k\oplus 1,t+\Delta t^j_k} )$ $,\forall b_j\in B \}$, number of training epoch $E$, $\epsilon$, $\tau$, and $\alpha$.\\
Combine the transition pairs from all agents regardless of agent index and store them in a shared replay buffer $\mcd$: $\mcd\equiv \{ (s^i_{k,t},a^i_{k,t},r^i_{k,t},UG^i_{k,t},DG^i_{k,t}$ $,s^i_{k\oplus 1,t+\Delta t^j_k},UG^i_{k\oplus 1,t+\Delta t^j_k}$ $,DG^i_{k\oplus 1,t+\Delta t^j_k}) \}_{i=1}^{T}$, where $i$ indexes the transition tuple.\\
Initialize main actor network $\pi_{\theta}$, main local critic network $Q_{\phi}$, and main global event critic network $E_{\varphi}$, shared by all the bus agents $b_j\in B$. \\
Initialize target actor network $\bar{\pi}_{\bar{\theta}}$, target local critic network $\bar{Q}_{\bar{\phi}}$, and global event critic network $\bar{E}_{\bar{\varphi}}$ and copy the parameters from the main networks.\\
\For{epoch=1 to $E$}{
	Divide the transition pairs in $\mcd$ into $K=\lfloor \frac{T}{\epsilon} \rfloor $ mini-batches of size $\epsilon$ and shuffle.\\
	\For{mini-batch=1 to $K$ }{
		\For{transition pair $i$=1 to $\epsilon$}{
		Calculate local and global critic values at stop $k$ using main critic networks and sum them up:\\
		$Z^i_{k,t}= Q( s^{i}_{k,t},a^{i}_{k,t};\phi ) + E( UG^i_{k,t},DG^i_{k,t};\varphi).$\\
		Calculate local and global critic values at the next stop $k\oplus 1$ using target critic networks and sum them up: \\
		$\bar{Z}^i_{k\oplus 1,t+\Delta t^j_k} = \bar{Q} ( s^{i}_{k\oplus 1,t+\Delta t^j_k},\bar{\pi} ( s^{i}_{k\oplus 1,t+\Delta t^j_k};\bar{\theta} ) ;\bar{\phi}) + \bar{E}( UG^i_{k\oplus 1,t+\Delta t^j_k},DG^i_{k\oplus 1,t+\Delta t^j_k};\bar{\varphi} )$. \\
		Calculate the mean-squared Bellman error for the $i$-th transition:\\
		$L^i{(\phi,\varphi)} = \frac{1}{2}{( Z^i_{k,t} -r^{i}_{k,t}-\gamma \bar{Z}^i_{k\oplus 1,t+\Delta t^j_k} )^2 } $. \\
		Calculate actor loss for the $i$-th transition:\\
		$\mathcal{J}(\theta)^i = -Q(s_{k,t}^i, \pi (s_{k,t}^i;{\theta} );{\phi} ) $.
		}
	Form batch critic loss as $\mathcal{L}(\phi,\varphi) = \frac{1}{\epsilon}\sum_{i=1}^{\epsilon}{L^i{(\phi,\varphi)}}$ and perform a gradient descent step on $\phi$ and $\varphi$ with $\alpha$.\\
	Form batch actor loss as $\mathcal{J}(\theta) = \frac{1}{\epsilon}\sum_{i=1}^{\epsilon}{\mathcal{J}^i(\theta)}$ and perform a gradient descent step on $\theta$ with $\alpha$.\\
	Update target networks by polyak averaging:\\
	$\{\bar{\theta},\bar{\phi},\bar{\varphi}\} \gets \tau \{\bar{\theta},\bar{\phi},\bar{\varphi}\} + (1-\tau)\{\theta,\phi,\varphi\}$.
	}
}
\textbf{Output:} Trained actor network $\pi_{\theta}^{*}$, local critic network $Q_{\phi}^{*}$, and global event critic network $E_{\varphi}^{*}$.

\caption{Offline-GMAC training process with parameter sharing}\label{algr:training}
\end{algorithm}

In our experiments, we fix the number of training epochs at $E=20$, since the offline dataset is static and training beyond this budget does not noticeably change policy quality.

\subsection{Training data synthesis}\label{sec:data_synthesis}

To train an offline RL policy, we need to use comprehensive and diverse data sets collected under various holding controls. Based on our RL element design (Section \ref{sec:rl_elements}), the essential components of the state $s^j_{k,t}$ (comprising bus spacings) and the reward $r^j_{k,t}$ (which measures spacing deviation) can both be derived from the instantaneous bus locations extracted from historical GPS trajectories. Unfortunately, most available bus GPS trajectory datasets do not contain holding control information. In RL terminology, these data sets provide information about the transition dynamics $\widetilde{\mcp}^j$, where the action space $\mca^j$ only consists of actions $a_{k,t}^j=0$ for all $j$, $k$, and $t$. In other words, they represent the conditional probability $p( s_{k\oplus 1,t+\Delta t^j_k}^j \mid s_{k,t}^j,a_{k,t}^j=0 )$. Essentially, this corresponds to a Markov Reward Process (MRP) with the interacting policy being no-control.\par

Holding-controlled data are rare and expensive to obtain since implementing holding actions in reality is costly. Given this, we wonder whether the historical uncontrolled bus trajectories can be used to ``synthesize'' holding-controlled data for the training purpose. If the answer is affirmative, then we will be able to develop an effective holding policy \textit{offline} solely utilizing historical GPS data, even before any control measures are implemented. This would eliminate the adverse effects caused by ``bad'' holding actions during the initial training phase of an online RL policy and also alleviate the challenges associated with data acquisition for offline RL training.\par

To construct such a synthesis scheme, note that holding control at a stop essentially improves system stability by delaying the buses running ahead. On the other hand, a fast bus will also be delayed if its travel time in the next link becomes longer (e.g., caused by a red signal or heavy traffic). Hence, a longer link travel time can be regarded as having the same effect as a hold on the bus's arrival time at the next stop (see Fig.~\ref{fig:overall_pipeline}). (Note that this synthesis makes no assumption about whether passengers continue to board during a hold; the trained policy yields comparable performance whether or not boarding-during-hold is enabled in the operating environment, as we will show in Section~\ref{sec:pareto_comparison}.) To act on this insight, we define $Y_{k,t}^j$ as the random variable for the historical travel time on link $l_k$ for bus $b_j$ departing stop $c_k$ at time $t$, which can be extracted from GPS trajectories. $y_{k,t}^j$ denotes a specific, observed realization of this variable extracted from our historical uncontrolled GPS dataset. We then define $Y_k$ as the random variable representing the aggregate empirical distribution formed by all of these historical realizations $\{y_{k,t}^j\}$ for a given link $l_k$. Building on these observed travel times, we propose a scheme to synthesize the \textit{unperformed} holding action, as outlined in Algorithm \ref{algr:infer_a}.\par

\begin{algorithm}[!ht]
\textbf{Input:} Historical uncontrolled bus trajectories for all buses $b_j\in B$; Maximum holding time $W$; Tunable hyperparameter $\eta$. \\
Extract experienced link travel times $\{ y_{k, t}^j \}$ for bus $b_j$ on link $l_k$ after it departs from stop $c_k$ at time $t$.\\
\For{all $l_k\in L$}{
	Aggregate $y_{k, t}^j$ of each bus $b_j\in B$ departing from stop $c_k$ at time $t$ and form a distribution $Y_k$ with mean $\mathbb{E}[Y_k]$ and standard deviation $\sigma[Y_k]$.\\
	Calculate a ``regulator'' $\frac{W}{\eta\cdot\sigma[Y_k]}$ with a hyperparameter $\eta$.\\
	\For{each transition $s_{k,t}^j \mapsto s_{k\oplus 1,t+\Delta t^j_k}^j$}{
	$\tilde{h}_{k,t}^j \gets \frac{W}{\eta\cdot\sigma[Y_k]}\cdot \max\{ 0, y_{k, t}^j-\mathbb{E}[Y_k] \}$ \\
	$\tilde{a}_{k,t}^j \gets \min\{ \tilde{h}_{k,t}^j, W\}/W$ .\\
	Associate $\tilde{h}_{k,t}^j$ and $\tilde{a}_{k,t}^j$ with the transition $s_{k,t}^j \mapsto s_{k\oplus 1,t+\Delta t^j_k}^j$.
	}
}
\textbf{Output:} Synthesized unperformed $\tilde{a}_{k,t}^j$ for all $b_j$ and $c_k$.

\caption{Synthesizing holding actions from uncontrolled bus trajectories}\label{algr:infer_a}
\end{algorithm}

The regulator $\frac{W}{\eta\cdot\sigma[Y_k]}$ in lines 5 and 7 of Algorithm \ref{algr:infer_a} is utilized to adjust the synthesized holding action based on travel time variability. A smaller standard deviation $\sigma[Y_k]$ indicates that the realized travel times $y_{k, t}^j$'s deviate slightly from the mean $\mathbb{E}[Y_k]$. In such cases, a larger regulator is employed to amplify the action. Conversely, a larger standard deviation is paired with a smaller regulator, which serves to moderate the action. The hyperparameter $\eta$ in the regulator serves as a tuning knob to control the responsiveness of the regulator to the standard deviation $\sigma[Y_k]$. The theoretical motivation for the introduction of $\eta$ is provided in Proposition \ref{prop:distribution_shift_bound} (Section \ref{sec:theoretical_justification_data_synthesis}).\par

\subsection{Theoretical justifications for the data synthesis scheme}\label{sec:theoretical_justification_data_synthesis}

A central concern of offline RL is extrapolation error \citep{levine2020offline}, which stems from state--action coverage gaps and transition distribution shift; in our setting, the former is naturally mitigated by the stochasticity of link travel times, while the latter, induced by our synthesis rule, is the focus of this section.\par

Our data synthesis scheme involves a distributional shift between the observed next states in historical uncontrolled data and the desired, counterfactual next states that would have resulted from applying the synthesized holds. The effectiveness of our proposed scheme relies on how closely these two distributions align. To theoretically quantify this closeness, we first formally define the two distributions. (We will also numerically verify this alignment in Section~\ref{sec:closeness_verification}.)

\begin{definition}
	(The desired distribution, $p_{des}$) For any given state $s_{k,t}^j$ and any holding time $h\geq 0$, the desired distribution, $p_{des}(\cdot | s_{k,t}^j,h)$, is the virtual probability distribution over the next state, $s_{k\oplus 1,t+\Delta t^j_k}^j$, that would arise from applying the holding action $h_{k,t}^j = h$.
\end{definition}

In this process, the system evolves under a set of environmental random variables. This set encompasses all sources of stochasticity in the system, such as passenger arrival processes at all stops and the traffic conditions on all links. A bus $b_j$ would, for instance, experience a random link travel time $T_{k,t}^j$ on link $l_k$ when departing from stop $c_k$. This notation is for the deployment-environment travel time and to distinguish it from the $Y_{k,t}^j$, which is derived from our observed dataset.

\begin{definition}
	(The observed distribution, $p_{obs}$) Let the synthesized holding time, $H_s$, be the random variable defined by the synthesis rule in Algorithm~\ref{algr:infer_a}: $H_s \equiv \frac{W}{\eta\cdot\sigma[Y_k]}\cdot \max\{ 0, Y_{k,t}^j-\mathbb{E}[Y_k] \}$. For any given state $s_{k,t}^j$ and any holding time $h\geq 0$, the observed distribution, $p_{obs}(\cdot | s_{k,t}^j,h)$, is the distribution of historical outcomes that occurred when the actual hold was zero, conditioned on the event that $H_s$ equals the value $h$.
\end{definition}

Unfortunately, an exact analysis of the closeness between these two distributions proves intractable due to the complex, coupled nature of bus system dynamics. The state transition for a single bus agent between two consecutive stops depends not only on its own travel time and holding time but also on the unobserved actions and stochastic movements of the preceding and succeeding buses during the local transition period (see Fig.~\ref{fig:ts_illustrate}). Furthermore, stochastic processes such as passenger arrivals at the upcoming stop influence the transition. To make the analysis tractable, we separate the distribution shift caused by our synthesis scheme alone from the corresponding effect that may be amplified or reduced by the stochastic, coupled dynamics of the bus system.\par

To understand why our synthesis scheme causes distributional shift, we first recall that the scheme (Algorithm \ref{algr:infer_a}) rests on the equivalence between a deliberate holding action and a stochastic delay in link travel time. However, this equivalence is valid only for positive deviations from the expected travel time. We can map such delays to equivalent holding actions, but we cannot map early arrivals to ``negative holds.'' The $\max\{0,\cdot\}$ truncation in our synthesis rule reflects this practical constraint. This asymmetry fundamentally introduces distributional shift in bus arrival times at the next stop.\par

Therefore, we organize the analysis in this section as follows. We begin in Section~\ref{sec:components_of_analysis} by defining the desired arrival time (what would have happened with a hold) and the observed arrival time (what we infer from historical data), an underlying dynamics map $G$ that encapsulates all unmodeled complexities, and a mathematical metric ($\mathcal{W}_1$) needed to measure the distance between distributions. Then, in Section~\ref{sec:arrival_distribution_shift}, we bound the distance between the two arrival distributions. Finally, in Section~\ref{sec:transition_shift_bound}, we demonstrate how this input bound propagates through the dynamics map $G$ to yield a final bound on the output state transition distributions (i.e., $p_{des}$ and $p_{obs}$).

\subsubsection{Components of the analysis: arrival times, map \texorpdfstring{$G$}{G}, and Wasserstein metric}\label{sec:components_of_analysis}

We begin by defining the two arrival time distributions. The desired arrival time of bus $b_j$ at stop $c_{k\oplus 1}$, $\hat{t}_{\text{arr},k\oplus 1}^{j}$, is a random variable calculated as $\hat{t}_{\text{arr},k\oplus 1}^{j} \equiv t_{\text{dep},k}^{j}+h+T_{k,t}^j$, where $t_{\text{dep},k}^{j}$ is the departure time of bus $b_j$ from stop $c_k$. The observed arrival time of bus $b_j$ at stop $c_{k\oplus 1}$, $\bar{t}_{\text{arr},k\oplus 1}^{j}$, is a random variable given as $\bar{t}_{\text{arr},k\oplus 1}^{j}\equiv t_{\text{dep},k}^{j}+0+Y_{k,t}^j$, where $Y_{k,t}^j$ is the historical travel time, conditioned on the event $H_s=h$.\par

Next, we define the underlying system dynamics map $G$, which encapsulates all unmodeled complexities, such as the passenger service process at the next stop and the coupled dynamics of neighboring buses in the fleet. The map $G$ takes two arguments: a state $s_{k,t}^j$ and a deterministic arrival time at the next stop, $t_a$. It outputs the resulting conditional probability distribution over the next state, $s_{k\oplus 1,t+\Delta t^j_k}^j$:
\begin{equation}
	G(s_{k,t}^j,t) \equiv p(s_{k\oplus 1,t+\Delta t^j_k}^j | s_{k,t}^j, t_{\text{arr},k\oplus 1}^{j}=t),
\end{equation}
where $t_{\text{arr},k\oplus 1}^{j}$ is the arrival time of bus $b_j$ at stop $k\oplus 1$. We assume that this underlying dynamics map $G$ is the same for both the desired and observed processes. Consequently, $p_{des}$ and $p_{obs}$ are the pushforward measures of the arrival time distributions $P(\hat{t}_{\text{arr},k\oplus 1}^{j})$ and $P(\bar{t}_{\text{arr},k\oplus 1}^{j})$ through the common map $G$.

Finally, to quantify the distance between these distributions, we introduce our mathematical metric. We model the state space as a complete and separable metric space $(\mathcal{X}, d)$. Let $\mathcal{P}_1(\mathcal{X})$ denote the set of all probability measures on $\mathcal{X}$ with finite first moment, i.e., measures $\mu$ that satisfy $\int_{\mathcal{X}} d(x_0, x) \, d\mu(x) < \infty$ for some $x_0 \in \mathcal{X}$. We measure the closeness between two measures $\mu, \nu \in \mathcal{P}_1(\mathcal{X})$ via the Wasserstein-1 distance, which admits the following well-known dual formulation \citep{villani2008optimal}:
\begin{equation}\label{eq:dual_formulation}
	\mathcal{W}_1(\mu, \nu) = \sup_{\|\psi\|_{\text{Lip}} \le 1} \left\{  \int_{\mathcal{X}} \psi(x) \, d\mu(x) - \int_{\mathcal{X}} \psi(x) \, d\nu(x)  \right\},
\end{equation}
where the supremum ranges over all 1-Lipschitz functions $\psi: \mathcal{X} \to \mathbb{R}$.\par

\subsubsection{Arrival time distribution shifts and Lipschitz continuity of \texorpdfstring{$G$}{G}}\label{sec:arrival_distribution_shift}
With all components defined, we now proceed to execute our analysis.

\begin{lemma}\label{lemma:distance_of_arrivals}
	Assume the actual travel time $T_{k,t}^j$ and the historical travel time $Y_{k,t}^j$ are independent and identically distributed random variables with mean $\mu_k$ and standard deviation $\sigma_k$. For each state $s_{k,t}^j$ and any holding time $h\geq 0$, the 1-Wasserstein distance between the desired and observed arrival time distributions is bounded as follows:
	\begin{equation}
		\mathcal{W}_1(P(\hat{t}_{\text{arr},k\oplus 1}^{j}), P(\bar{t}_{\text{arr},k\oplus 1}^{j})) = \mathcal{O}(|W/\eta - \sigma_k| + \sigma_k).
	\end{equation}
	\begin{proof}
		Please see Appendix~\ref{apdx:lemma_distance_of_arrivals_proof}.
	\end{proof}
	
\end{lemma}
Next, we establish the tools required to propagate this input bound through the dynamics map $G$. This propagation requires a key assumption about the regularity of the map.

\begin{assumption}\label{assumption:L_continuity}
	For each state $s_{k,t}^j$, the map $G(s_{k,t}^j,\cdot )$ is Lipschitz continuous in its second argument. That is, there exists a constant $L_s$ (which may depend on the state $s$) such that for all arrival times $t_a$ and $t_b$:
	\begin{equation}
		\mathcal{W}_1(G(s_{k,t}^j,t_a), G(s_{k,t}^j,t_b)) \leq L_s |t_a - t_b|.
	\end{equation}
\begin{remark}
	This assumption formalizes that the physical bus system is not overly sensitive: a small change in a bus's arrival time results in a proportionally small, bounded shift in the output distribution of system states. This assumption is generally satisfied in bus systems, as the $L_s$ can take large values as needed.
\end{remark}
\end{assumption}

This assumption bounds the sensitivity of the map for deterministic inputs. The following theorem extends scalar inputs to distributions (i.e., our $P(\hat{t})$ and $P(\bar{t})$), and shows that the Lipschitz property is preserved under the pushforward operation.

\begin{theorem}\label{theorem:push_forward}
	Let $(\mathcal{X}, d_\mathcal{X})$ and $(\mathcal{Y}, d_\mathcal{Y})$ be complete and separable metric spaces. Let $K: \mathcal{Y} \to \mathcal{P}_1(\mathcal{X})$ be a stochastic kernel that is $L_s$-Lipschitz, i.e., for any $y_1, y_2 \in \mathcal{Y}$:
	\begin{equation}\label{eq:kernel_assumption}
		\mathcal{W}_1(K(y_1), K(y_2)) \leq L_s \cdot d_\mathcal{Y}(y_1, y_2).
	\end{equation}
	Let $K_\#$ be the operator that maps an input distribution $P \in \mathcal{P}_1(\mathcal{Y})$ to the corresponding mixture distribution $K_\#P \in \mathcal{P}_1(\mathcal{X})$, defined by:
	\begin{equation}\label{eq:mix_definition}
		(K_\#P)(A) = \int_{\mathcal{Y}} K(y)(A) \, dP(y) \quad \text{for any measurable } A \subseteq \mathcal{X}.
	\end{equation}
	Then, this mixing operator $K_\#: \mathcal{P}_1(\mathcal{Y}) \to \mathcal{P}_1(\mathcal{X})$ is also $L_s$-Lipschitz:
	\begin{equation}
		\mathcal{W}_1(K_\#P_1, K_\#P_2) \leq L_s \cdot \mathcal{W}_1(P_1, P_2) \quad \text{for all } P_1, P_2 \in \mathcal{P}_1(\mathcal{Y}).
	\end{equation}
	\begin{proof}
		Please see Appendix~\ref{apdx:theorem_push_forward}.
	\end{proof}
\end{theorem}

\subsubsection{State transition distribution shift bounds}\label{sec:transition_shift_bound}

We combine the input bound from Lemma~\ref{lemma:distance_of_arrivals} with the propagation rule from Theorem~\ref{theorem:push_forward} to arrive at our main result:
\begin{proposition}\label{prop:distribution_shift_bound} Assume the actual travel time $T_{k,t}^j$ and the historical travel time $Y_{k,t}^j$ are independent and identically distributed random variables with mean $\mu_k$ and standard deviation $\sigma_k$. Under Assumption~\ref{assumption:L_continuity}, for any given state $s_{k,t}^j$ and any holding time $h\geq 0$, the 1-Wasserstein distance between the desired state distribution, $p_{des}$, and the observed state distribution, $p_{obs}$, is bounded as follows:
	\begin{equation}
		\mathcal{W}_1(p_{des}(\cdot | s_{k,t}^j, h), p_{obs}(\cdot | s_{k,t}^j, h)) = \mathcal{O}(L_s ( |W/\eta - \sigma_k| + \sigma_k )).
	\end{equation}
\begin{proof}
	The state distributions $p_{des}$ and $p_{obs}$ are the pushforwards of their respective arrival time distributions through the map $G(s_{k,t}^j,\cdot)$. By Theorem~\ref{theorem:push_forward}, the distance between these output distributions is bounded by $L_s$ times the distance between the input distributions:
	\begin{equation}
		\mathcal{W}_1(p_{des}(\cdot | s_{k,t}^j, h), p_{obs}(\cdot | s_{k,t}^j, h))\leq L_s\cdot \mathcal{W}_1(P(\hat{t}_{\text{arr},k\oplus 1}^{j}), P(\bar{t}_{\text{arr},k\oplus 1}^{j})).
	\end{equation}\par
	
	By Lemma~\ref{lemma:distance_of_arrivals}, the distance between the input arrival time distributions is $\mathcal{O}(|W/\eta - \sigma_k| + \sigma_k)$.\par
	
	Substituting the result of Lemma~\ref{lemma:distance_of_arrivals} into the inequality yields the final bound:
	\begin{equation}
		\mathcal{W}_1(p_{des}(\cdot | s_{k,t}^j, h), p_{obs}(\cdot | s_{k,t}^j, h))\leq L_s\cdot \mathcal{O}(|W/\eta - \sigma_k| + \sigma_k) = \mathcal{O}(L_s ( |W/\eta - \sigma_k| + \sigma_k )).
	\end{equation}
\end{proof}
\end{proposition}

\begin{remark}
This bound establishes that the distance between the desired distribution $p_{des}$ (the one we seek to learn) and the observed distribution $p_{obs}$ (the one available in our historical data) scales with $L_s$ multiplied by the bound from Lemma~\ref{lemma:distance_of_arrivals}. The bound disentangles two different factors: the system's intrinsic sensitivity to arrival time perturbations (captured by $L_s$) and the approximation error inherent in our synthesis scheme (captured by $\mathcal{O}( |W/\eta - \sigma_k| + \sigma_k )$).\par

The bound also reveals a pathway for mitigating the distributional shift. When we calibrate the regulator parameter $\eta$ such that $\eta \approx W/\sigma_k$, the $|W/\eta - \sigma_k|$ vanishes, which reduces the approximation error to its minimal achievable level.\par

The dependence on $\sigma_k$ further exposes a tradeoff in practice: a small travel time variability reduces the approximation error between synthesized and true controlled dynamics, but may also limit the diversity of training data; conversely, large values of $\sigma_k$ enable richer state-action coverage in the training dataset, yet this enrichment comes at the cost of amplified distributional shift.
\end{remark}

\section{Experiments}\label{sec:experiments}
To assess the performance of the proposed Offline-GMAC, we develop a simulator that captures key operational characteristics of a busy real-world bus route. It is important to note that this simulator is solely employed for \textit{assessment purposes}, and our offline method does not rely on it for policy training. This is different from previous online RL studies, where a simulator frequently functions as an environment for both training and testing. The real-world implementation of our method simply necessitates access to historical bus trajectories, whether involving holding control or not.\par

Section \ref{sec:simulator} details the real-world bus route data and simulator development. Section \ref{sec:comparison_strategies} introduces the strategies to be compared with our Offline-GMAC. Section \ref{sec:comp_on_off} compares the Offline-GMAC trained by uncontrolled data with its Online-GMAC counterpart, focusing on the initial implementation stage. Section \ref{sec:pareto_comparison} benchmarks our Offline-GMAC against existing model-based and model-free strategies over the medium to long run. In this section, the Pareto frontiers of these strategies under various hyperparameters are compared. Finally, Section \ref{sec:closeness_verification} provides empirical justifications for the proposed data synthesis scheme.

\subsection{Simulator for a real-world bus route}\label{sec:simulator}
The loop bus route No. 34, located in Chengdu, China, has been selected for the purpose of this study. It circulates along the First Ring Road of Chengdu, encompassing a total of 43 stops. To focus on bus bunching, which tends to occur more frequently during peak periods \citep{daganzo2009headway}, we choose to study the morning peak period from 7:30 to 9:30. We define this two-hour window as an \textit{episode}.\par

The simulator, developed using Python with a time-based logic, consists of two modules: the bus stop module and the inter-stop link module. At each time step $\delta t$, the simulator executes the two modules sequentially. Key elements of the simulator are introduced below. (For more details regarding the simulator, please refer to our online code repository.)

\subsubsection{Operations at stops}
Passenger arrivals at stop $c_k$ follow a Poisson process with a rate of $\lambda_k$. Each stop $c_k$ is associated with a set of downstream stops denoted by $\Omega_k$, representing the destinations for passengers boarding at $c_k$. A probability $\rho_{ki}$ is assigned to each specific stop $c_i\in \Omega_k$, ensuring that $\sum_{c_i\in \Omega_k}\rho_{ki} = 1$. This implies that the demand rate for passengers traveling from stop $c_k$ to stop $c_i$ is given by $\lambda_k\cdot\rho_{ki}$.\par
Buses operating on route No.~34 are equipped with two doors: the front door for passenger boarding and the rear door for alighting. At stop $c_k$, passengers board a bus at a rate of $\omega_k$ (passengers/sec) and alight at a rate of $\vartheta_k$ (passengers/sec). The bus dwell time at a stop is dictated by the maximum of the boarding and alighting times. Each bus on route No. 34 can carry a maximum of 100 passengers.\par
Due to the lack of real passenger OD data, estimating the $\rho_{kj}$'s is not feasible. We therefore only calibrated the $\lambda_k$'s while adopting the $\rho_{kj}$ values from Table 3 of \citet{he2022dynamic}. Fig.~\ref{fig:line34_parameters} depicts the calibrated $\lambda_k$ values by the gray bars where the calibration procedure is delegated to Appendix~\ref{apdx:arrival_rate_estimation}. Recall that the simulation serves the purpose of validation and demonstration, and thus moderate deviations from the reality will not hinder these purposes.\par

\begin{figure}[!ht]
    \centering
    \includegraphics[scale=0.45]{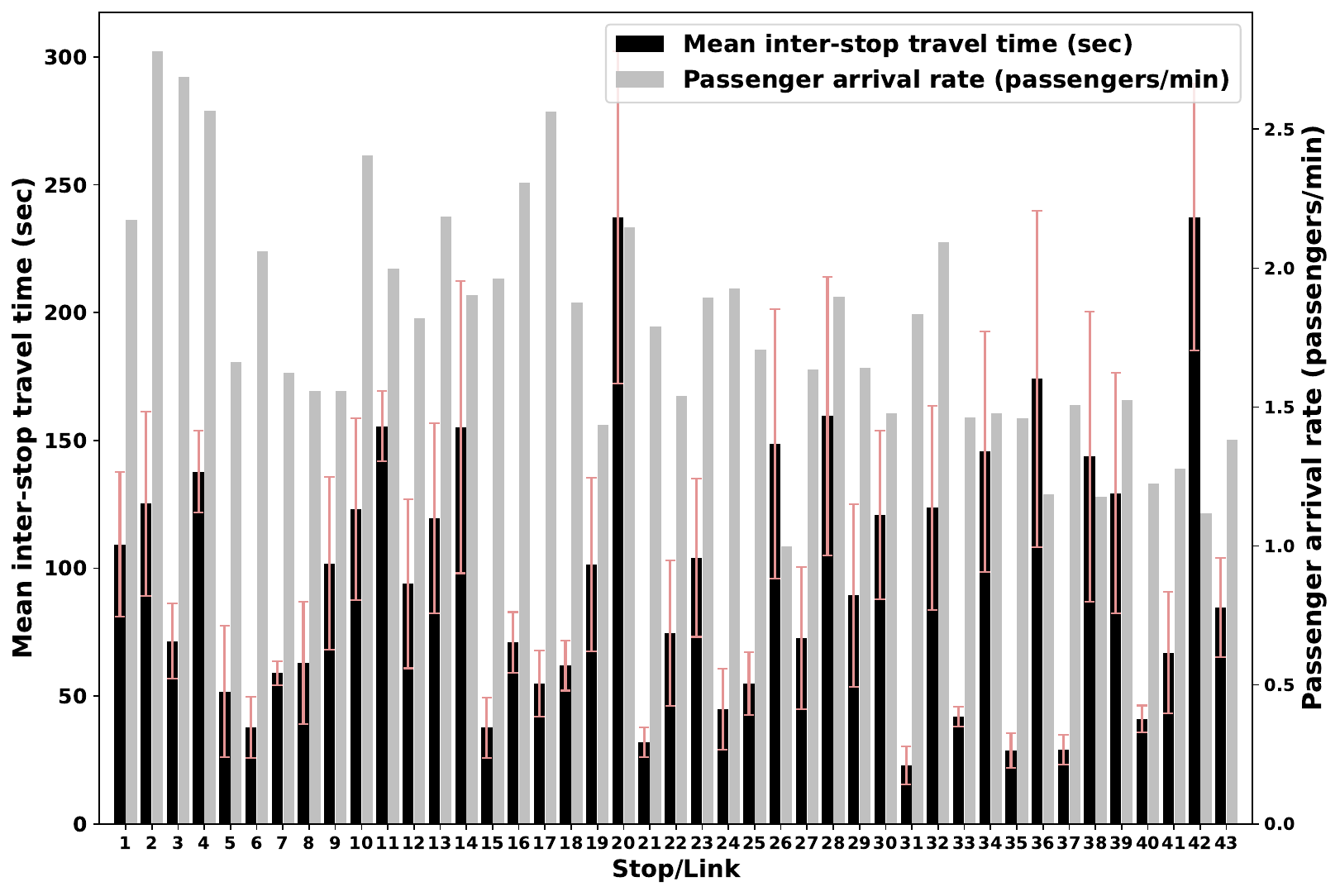}
    \caption{Estimated parameters for bus route No. 34.}\label{fig:line34_parameters}
\end{figure}\par

\subsubsection{Operations on inter-stop links}
The travel times on link $l_k$ connecting stops $c_k$ and $c_{k\oplus 1}$ are modeled as random variables following specific distributions, which are fitted to the GPS data using maximum likelihood estimation. We found that the lognormal distribution fits the data best among all the distributions tested. The black bars and the red error bars in Fig.~\ref{fig:line34_parameters} mark the means and standard deviations of the fitted inter-stop travel time distributions.
 
\subsection{Strategies for comparison}\label{sec:comparison_strategies}
 To comprehensively assess our proposed Offline-GMAC, we simulate and compare the following alternative strategies.
\begin{enumerate}
	\item \label{enum:alternative_on_gmc} \textbf{Online-GMAC.} Online RL methods were claimed to outperform non-RL-based approaches \citep{chen2016real,alesiani2018reinforcement,wang2020dynamic,wang2021reducing,he2022dynamic}. Thus, we select the online variant of our Offline-GMAC as a benchmark for comparison. In this online method, experiences (i.e., transition pairs) are continuously added to the replay buffer of size $|\mcd|$, employing the most recently learned policy (while under Offline-GMAC, a trained policy will remain unchanged during implementation). Additionally, to encourage exploration, a Gaussian noise with mean 0 and standard deviation $\upsilon$ is added to the output of the actor network. This exploratory noise empowers the online algorithm with the ability to search for superior solutions \citep{lillicrap2015continuous}. The  value of $\upsilon$ decays over episodes as $\upsilon(e) = \upsilon_0\cdot \xi^{e}$, where $\upsilon_0$ is the initial value, $\xi$ is the decay rate, and $e$ is the episode index. This approach of adaptive noise tuning strikes a balance between early exploration and later exploitation. Note that the same simulator is used for both training and testing Online-GMAC.
	\item \label{enum:alternative_off_gmc} \textbf{Offline-GMAC trained by uncontrolled data.} We use a dataset comprising 50 episodes of uncontrolled trajectories generated by the simulator to train an optimal holding policy. The ``holds'' are synthesized using Algorithm \ref{algr:infer_a} in Section \ref{sec:data_synthesis}. Fig.~\ref{fig:overall_pipeline} shows the data preparation process during which uncontrolled trajectories are transformed into transition pairs with synthesized holds.
	\item \label{enum:alternative_fhc} \textbf{Modified forward headway-based control (FHC) strategy, hereafter FHC for brevity.} This strategy is adapted from the adaptive control policy proposed by \citet{daganzo2009headway}, which incorporates a slack term to ensure non-negative holding times. We found that the slack term slightly degrades performance on the loop corridor, so we remove it and retain only a single control parameter $\zeta$ (corresponding to Daganzo's $\alpha$), with a max operator to ensure non-negativity. The original FHC strategy has been frequently utilized as a benchmark representing the performance of model-based control methods, as exemplified in \citet{wang2020dynamic}. Under the FHC strategy, the holding time is determined by:
\begin{equation}\label{eq:Daganzo2009}
	h^j_{k,t} = \max{ \{0, \zeta( \bar{H}-(t_{\text{dep},k}^j-t_{\text{dep},k}^{j\ominus 1} )) \}},
\end{equation}
where $\zeta$ is a control parameter adjusting holding intensity, $\bar{H}$ is the equilibrium headway for the loop corridor and $t_{\text{dep},k}^j$ is the time when bus $b_j$ departs from stop $c_k$. Determining $\bar{H}$ for the loop corridor is a little intricate due to its reciprocal influence with the holding times $h^j_{k,t}$'s. We delegate its calculation to Appendix~\ref{apdx:deter_psi}.
	\item \label{enum:alternative_off_gmc_fhc} \textbf{Offline-GMAC trained by FHC data.}
	We use a second dataset consisting of 50 episodes of trajectories, this time generated by the simulator under the FHC strategy, to train another optimal holding policy. This approach mimics having expert policies implemented in reality for generating the training data---a common practice in offline RL algorithms addressing diverse control problems in other domains \citep{agarwal2020optimistic}.
\end{enumerate}

\subsection{Comparisons between Online-GMAC and Offline-GMAC trained by uncontrolled data during initial implementation}\label{sec:comp_on_off}

\subsubsection{Hyperparameter selections}\label{sec:hyperparameter_selection}
Both the Online- and Offline-GMAC methods employ a single-layer GCN \citep{kipf2016semi} with a hidden size of 16 for conducting message passing. (Our findings indicate that the GCN slightly outperforms the GAT \citep{velivckovic2017graph, wang2021reducing}, but the performance gap tends to be small. Furthermore, variations in hidden sizes have minimal impacts on the results.) The $w$ and $\gamma$ are set to 0.005 and 0.9, respectively. Results under different $w$ will be presented in Section \ref{sec:pareto_comparison}. \par
To ensure fair comparisons, we conducted a grid search over a range of hyperparameter configurations for the Online-GMAC method. The search space was defined by the Cartesian product of the following parameter value sets: $\alpha \in \{5\mathrm{e}-4,1\mathrm{e}-4,5\mathrm{e}-3\}$, $\epsilon \in \{32, 64, 96, 128\}$, $\xi\in\{0.1,0.15,0.2\}$, $\upsilon_0\in\{0.8,0.9\}$, $|\mcd|\in \{10000,15000\}$, the number of hidden layers in the MLP of actor and critic networks $\in\{1,2\}$, and the hidden size in the MLP of actor and critic networks $\in\{64,128\}$. The hyperparameter configuration yielding the highest average actual return during the evaluation phase is reported. This average actual return is determined as follows:
 \begin{equation}\label{eq:actual_return}
 	AR = \frac{1}{M} \sum_{j=1}^{M} \sum_{i=1}^{V(j)}{\gamma^{i-1} r_{k^\star \oplus i,t^\star+\Delta t^i}^{j}}\;,
 \end{equation}
where $r_{k^\star \oplus i,t^\star+\Delta t^i}^{j}$ is the reward received when bus $b_j$ departs from stop $k^\star \oplus i$ at time $t^\star + \Delta t^i$ in the \textit{simulation}. Here, $k^\star$ represents the first stop from which bus $b_j$ departs during the simulation period, and $t^\star$ corresponds to the departure time. With a slight abuse of notation, we use $\Delta t^i$ to denote the random time interval between departures from stop $k^\star \oplus i$ and stop $k^\star \oplus i \oplus 1$. $V(j)$ is the number of experienced transitions for bus agent $b_j$ during the simulation.\par

To stay conservative, the Offline-GMAC training process adopts the same hyperparameter values optimized for the Online-GMAC. ($\xi$, $\nu_0$ and $|\mathcal{D}|$ used in the Online-GMAC are not needed for the Offline-GMAC.) Only the exclusive hyperparameter $\eta$ is optimized via a grid search over the set $\{0.1, 0.5, 1, 1.5\}$ to fine-tune the Offline-GMAC when training on uncontrolled data. Eq.~\eqref{eq:actual_return} is again used as the objective function for determining the most suitable value of $\eta$.\par

All algorithms were implemented on a 2022 Macbook Air equipped with an M2 processor and 16GB memory. Training a single Online-GMAC policy under one hyperparameter configuration takes approximately half an hour. Note that most of the time is spent collecting data through simulator interaction. In contrast, the proposed Offline-GMAC approach can train a policy for one hyperparameter configuration within 1 minute.

\subsubsection{Evolution of headway variations and holding times in the initial stage}\label{sec:evolution_headway}
Figs.~\ref{fig:headway_evolution} and \ref{fig:hold_time_evolution} plot the evolution curves of headway variation and holding time, respectively, for the first 100 episodes of implementation under the Online-GMAC and the Offline-GMAC trained by uncontrolled data. These results were obtained using the hyperparameter configurations developed in Section \ref{sec:hyperparameter_selection}. Headway variation is defined as the standard deviation of departure headways, measured across all buses at all stops. Meanwhile, the holding time was calculated as the average across all stops and buses.\par

\begin{figure}[!ht]
    \centering
    \includegraphics[scale=0.23]{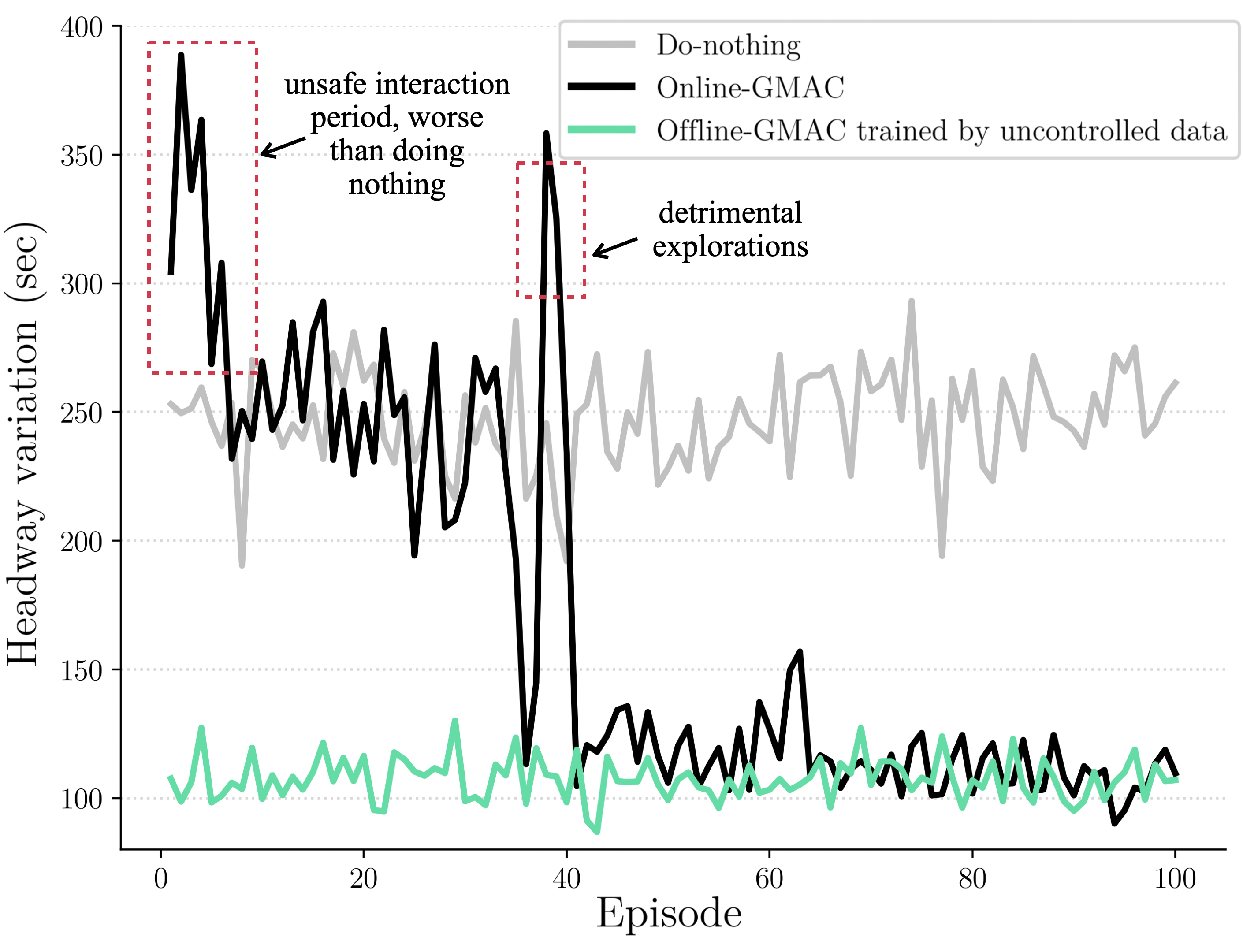}
    \caption{Headway variation evolution under different holding strategies.}\label{fig:headway_evolution}
\end{figure}\par
\begin{figure}[!ht]
    \centering
    \includegraphics[scale=0.23]{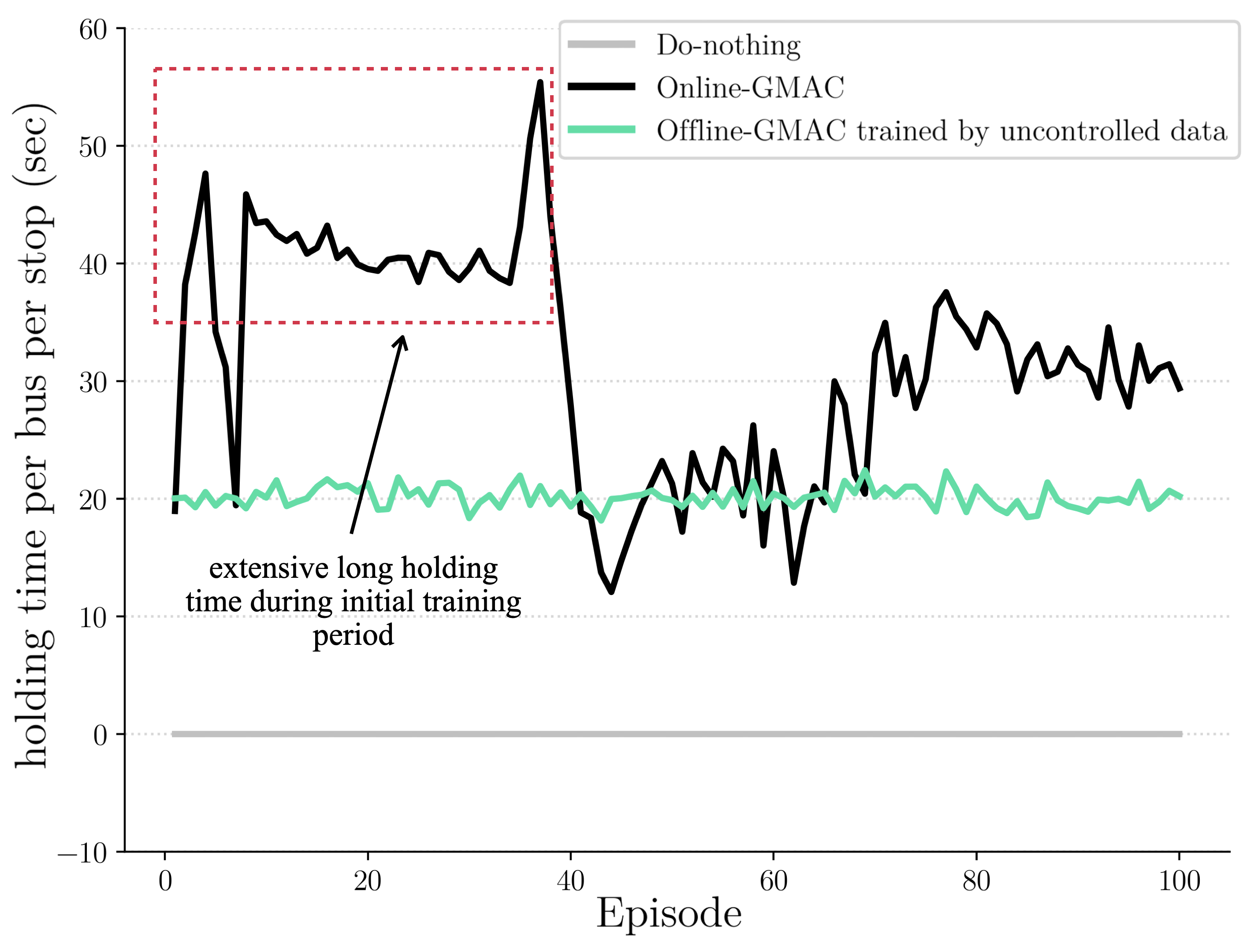}
    \caption{Holding time evolution under different holding strategies.}\label{fig:hold_time_evolution}
\end{figure}\par

Note that under each distinct strategy, the episodes serve different purposes.   Specifically, for the Offline-GMAC trained with uncontrolled data (depicted by the green curves), the 100 episodes were simulated under the trained policy. For the Online-GMAC (the black curves), the policy was continuously updated through training during the 100 episodes. In addition, results under the do-nothing strategy (the grey curves) are also presented, serving as a baseline reference. \par

First note in Figs.~\ref{fig:headway_evolution} and \ref{fig:hold_time_evolution} that the Online-GMAC performed poorly in the beginning because the agents had little prior knowledge of the environment. The headway variation even significantly exceeded that of the do-nothing strategy, and buses were held for extended times. This would lead to increased passenger dissatisfaction and elevated operating costs. The performance of the Online-GMAC gradually improved as the agents continued to interact with the environment, exploring more of the state-action space and understanding its associated rewards and penalties. Nevertheless, the costs of learning include approximately 40 days of compromised headway reliability and excessive holding times (recall that each episode represents a morning peak period). This period, marked by unreliable operations and frequent bus bunching, is unacceptable for real-world bus services.\par

In contrast, the Offline-GMAC, once trained by uncontrolled trajectory data, demonstrated high performance from the outset, exhibiting both reduced headway variations and shorter holding times. No further learning is required and the performance remained consistently at a high level throughout all episodes. This merit renders the Offline-GMAC particularly beneficial for real-world implementations. All we need is historical GPS trajectory data, a resource that is widely available.\par

\subsection{Performance comparison of various strategies over the medium to long run}\label{sec:pareto_comparison}
This section compares the Offline-GMAC policies trained by 50 episodes of uncontrolled and FHC-controlled data sets, respectively, the Online-GMAC policy trained for 100 episodes and the FHC strategy. Each strategy is executed in the simulator for an additional 100 episodes, during which their performances are measured and compared. No training is conducted in these episodes to ensure a fair comparison of the policies' performances over the medium to long run.\par

For a comprehensive and equitable comparison, we conduct grid searches on the essential hyperparameters of the above strategies and develop Pareto frontiers depicting headway variation versus holding time in Fig.~\ref{fig:pareto_headway}. Pareto frontier depicts the set of optimal trade-offs where one performance measure cannot be improved without worsening another. It allows us to identify the best strategy that balances the goals of minimizing headway variation while considering the associated holding time costs. The points used for constructing the Pareto frontier of each strategy are trained and evaluated as per the following hyperparameter settings:
\begin{enumerate}
	\item \textbf{Online-GMAC.} Vary $w$ from 0.001 to 0.02 with increments of 0.001 based on the optimal hyperparameter set established in Section \ref{sec:hyperparameter_selection};
	\item \textbf{Offline-GMAC trained by uncontrolled data.} Perform a grid search on $w$ from 0.001 to 0.01 with increments of 0.001 and $\eta\in \{0.1, 1, 1.5\}$, with reference to the optimal hyperparameter set in Section \ref{sec:hyperparameter_selection};
	\item \textbf{FHC strategy.} Search $\zeta$ from $0.1$ to $0.9$ with increments of $0.05$;
	\item \textbf{Offline-GMAC trained by FHC data.} Using training data generated by the FHC strategy with $\zeta=0.6$, perform a grid search on $w$ from 0.001 to 0.02, as per the optimal hyperparameter set in Section \ref{sec:hyperparameter_selection}.
\end{enumerate}
Headway variation and holding time are averaged across all buses, stops, and the 100 episodes. Passenger waiting time is averaged across all passengers and the 100 episodes. (For a detailed visualization of bus trajectories comparing uncontrolled and Offline-GMAC-controlled scenarios, please refer to Appendix~\ref{apdx:ts_visualization}.)\par

\begin{figure}[!ht]
    \centering
    \includegraphics[width=0.85\textwidth]{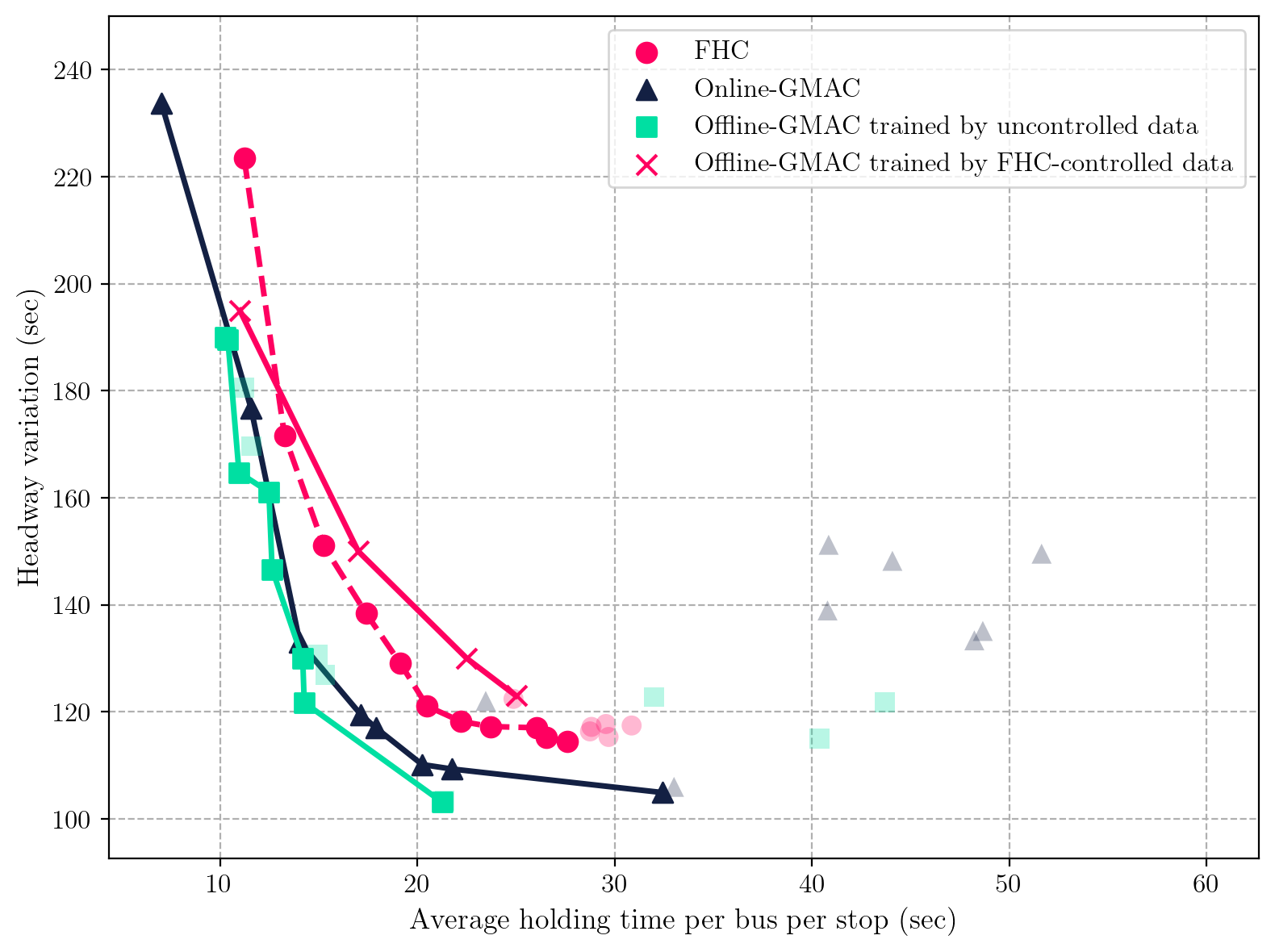}
    \caption{Pareto frontiers of headway variation versus holding time under various control strategies (after training).}\label{fig:pareto_headway}
\end{figure}\par

First note in Fig.~\ref{fig:pareto_headway} that the Pareto frontiers of the Online-GMAC and Offline-GMAC trained by uncontrolled data are positioned in the lower left corner compared to the FHC strategy. This indicates that the above two RL methods can attain the same headway regularity with shorter holding times. For example, to achieve a headway variation of 120 seconds, the Offline-GMAC trained using uncontrolled data can reduce the holding time by approximately 25\% versus the FHC strategy. Moreover, these two RL methods can produce policies with lower headway variations given sufficient holding times. For example, the Online- and Offline-GMAC can diminish headway variation to 103 and 104 seconds, respectively, while the minimum headway variation attained by the FHC strategy is 114 seconds. These results are aligned with the literature demonstrating the superior performance of online RL approaches over non-RL methods \citep{wang2020dynamic}.\par

Interestingly, Fig.~\ref{fig:pareto_headway} also reveals that the Offline-GMAC trained using uncontrolled data achieves comparable performance to the Online-GMAC. When the allowable average holding time is less than 22 seconds, the Offline-GMAC even slightly outperforms the Online-GMAC. On the other hand, when holding times are sufficiently long, the Online-GMAC can discover policies with the lowest headway variation, but the improvement tends to be small and is achieved at the cost of considerably longer holding times. The remarkable success of the Offline-GMAC trained by uncontrolled data may stem from its broader and more exploratory range of synthesized actions during training.\par

On the other hand, it is surprising that the Offline-GMAC trained with FHC data not only performs worse than the one trained with uncontrolled data, but is also inferior to the original FHC strategy. One plausible explanation for this outcome is that the state-action space induced by the FHC strategy might be smaller than the space created by our synthesis scheme. Note that the synthesized actions are derived from stochastic travel times and are independent of the states, rendering them more diverse than those collected under the FHC strategy where the actions are generated by a predefined rule. This finding indicates that employing rule-based expert policies for generating RL training data may not necessarily confer additional advantages. Contrarily, this hybrid method could be counterproductive since rule-based policies and the flexible, search-oriented reinforcement learning methods are two fundamentally different approaches.

As a robustness check, we additionally evaluated the trained Offline-GMAC policies in a modified simulator that accepts boardings while a bus is held. Using the same configurations as in Fig.~\ref{fig:pareto_headway}, the resulting Pareto frontier remains close to that of the original simulator. We also re-ran the same Pareto sweep under a perturbed route configuration in which each stop's passenger arrival rate is shifted by Gaussian noise (standard deviation 0.2 pax/min). Under this perturbed demand, the Online-GMAC and Offline-GMAC Pareto frontiers continue to overlap closely, indicating that the comparative findings in Fig.~\ref{fig:pareto_headway} are not specific to the original route configuration.

\subsection{Empirical justifications for the data synthesis scheme}\label{sec:closeness_verification}

To complement the theoretical analysis in Section~\ref{sec:theoretical_justification_data_synthesis}, we numerically verify the closeness between the observed distribution $p_{obs}$ and desired distribution $p_{des}$. We compare these distributions using data from our simulator for bus route No.~34.\par

The red and black curves in Fig.~\ref{fig:visual_p}a depict the probability density functions of the \textit{observed} forward spacing distribution associated with uncontrolled data, $p_{obs}(f_{k\oplus 1,t+\Delta t^j_k}^j \mid f_{k,t}^j=0.05, h_{k,t}^j=0)$, and the \textit{desired} distribution, $p_{des}( f_{k\oplus 1,t+\Delta t^j_k}^j \mid f_{k,t}^j=0.05, \tilde{h}_{k,t}^j )$, respectively. The latter curve is constructed by synthesizing $\tilde{h}_{k,t}^j$ from uncontrolled data and subsequently implementing $\tilde{h}_{k,t}^j$ in the simulator. Notably, these two distributions exhibit a substantial degree of similarity. The similar pattern is found for the backward spacing, as shown in Fig.~\ref{fig:visual_p}b. In particular, the two distributions align most closely on the lower side of the forward spacing and the upper side of the backward spacing, precisely the regions that the $\max\{0,\cdot\}$ truncation in Algorithm~\ref{algr:infer_a} does not act on. This finding partially explains the good performance of our Offline-GMAC policy trained by uncontrolled data.
\begin{figure}
  \centering
  \subfloat[]{%
       \includegraphics[width=0.45\textwidth]{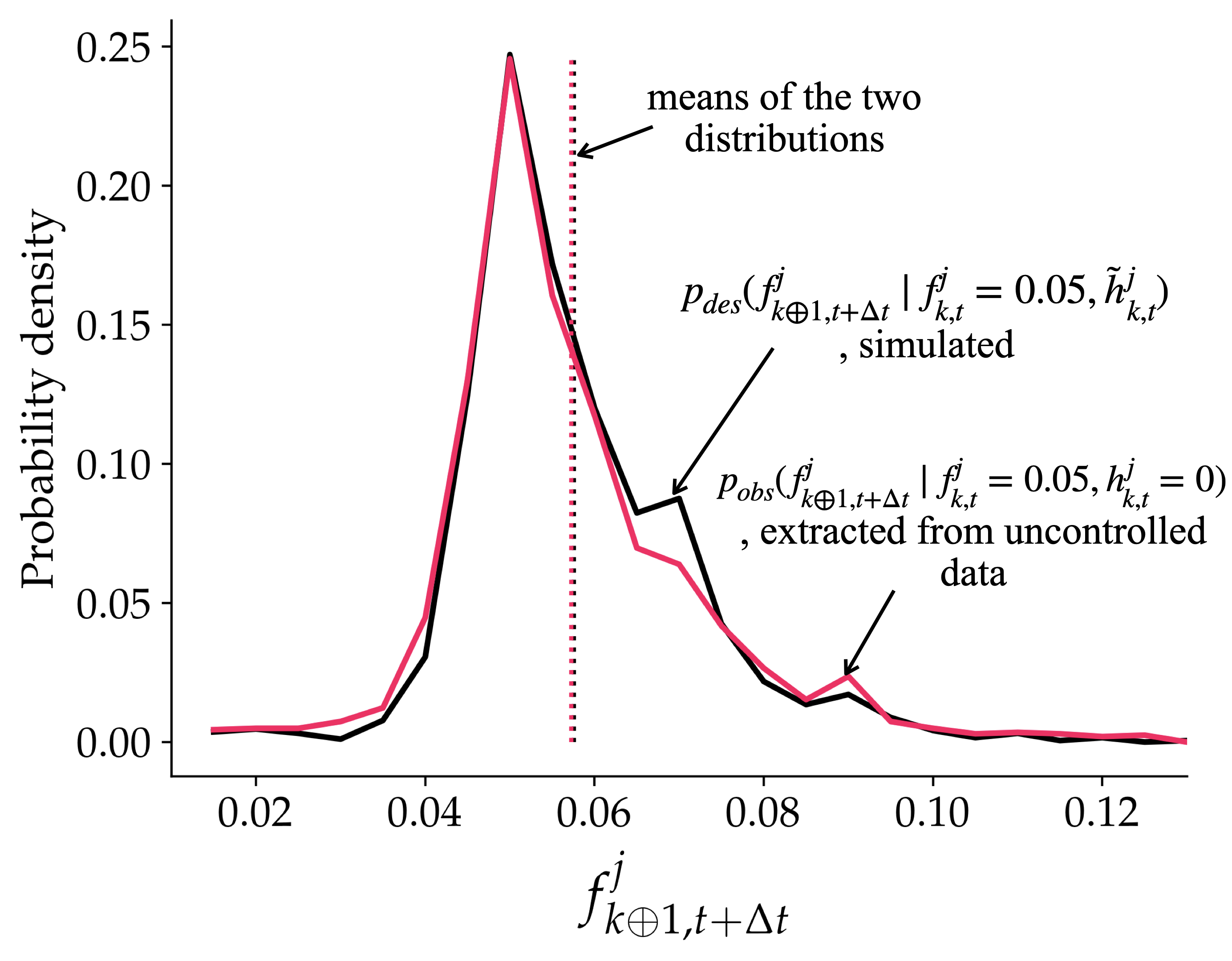}}
  \subfloat[]{%
        \includegraphics[width=0.45\textwidth]{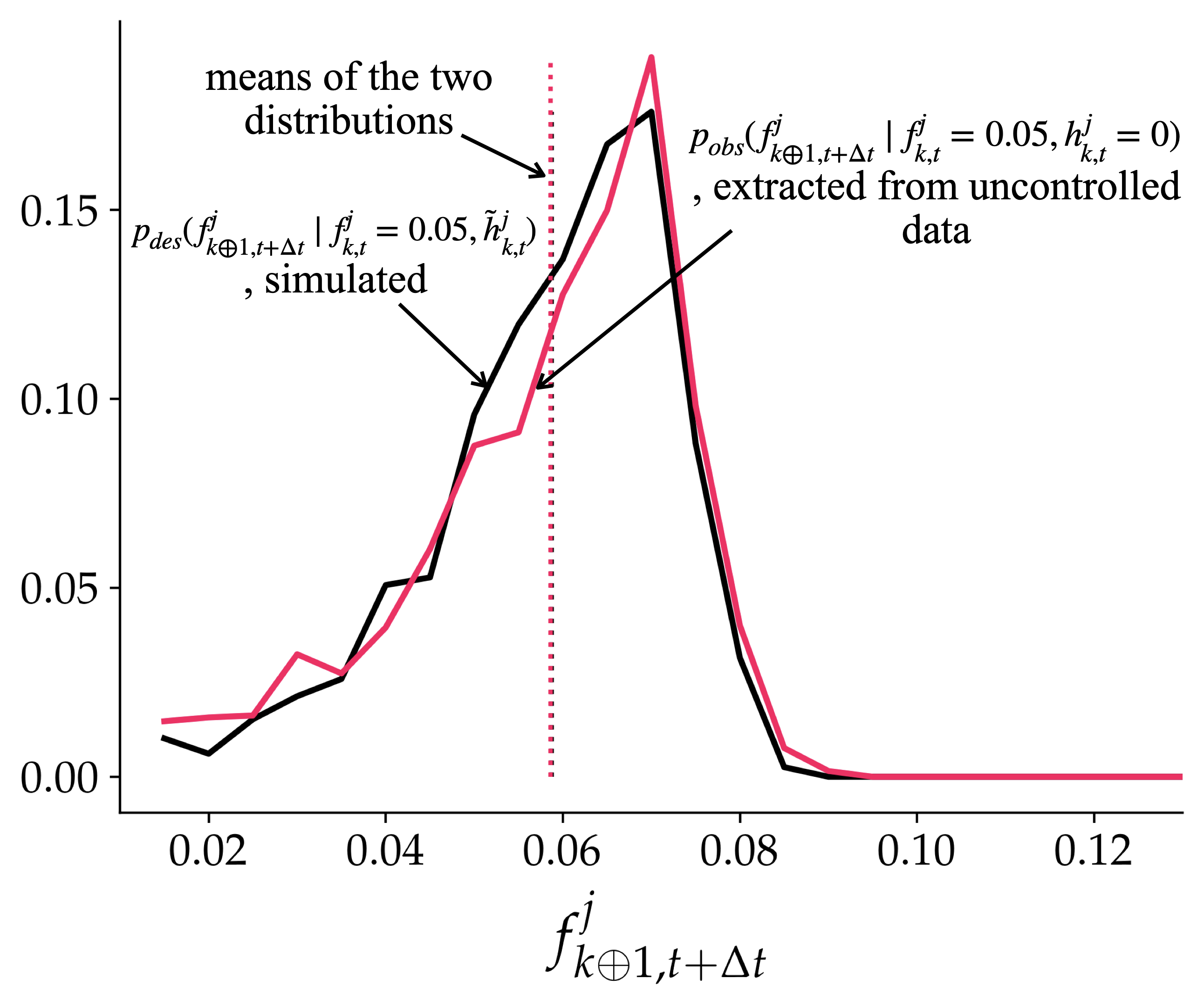}}
  \caption{Transition probability distributions of $f^j_{k\oplus 1,t+\Delta t}$ given $f_{k,t}^j=0.05$. (a) Forward spacing $f_{k,t}^j$. (b) Backward spacing $b_{k,t}^j$.}
  \label{fig:visual_p}
\end{figure}

\section{Issues on real-world deployment}\label{sec:deployment}

\subsection{Data size requirement}\label{sec:historical_data_size}
Bus route plans, stop locations, timetables, and demand patterns frequently change over time, necessitating contingent adjustments to control policies. It is therefore worthwhile to investigate the historical data size required to train an optimal policy before such an operational change occurs. To this end, we trained Offline-GMAC policies using an increasing number of episodes of uncontrolled data (from 1 to 25 episodes). For each dataset, training was performed with the optimal hyperparameters defined in Section \ref{sec:hyperparameter_selection}, with $w$ varying from 0.001 to 0.005 with step size of 0.001. The policy yielding the lowest headway variation was selected for each dataset. Fig.~\ref{fig:sa_day} reports the performance of these policies, assessed over 100 episodes following the training process.

\begin{figure}[!ht]
    \centering
    \includegraphics[scale=0.72]{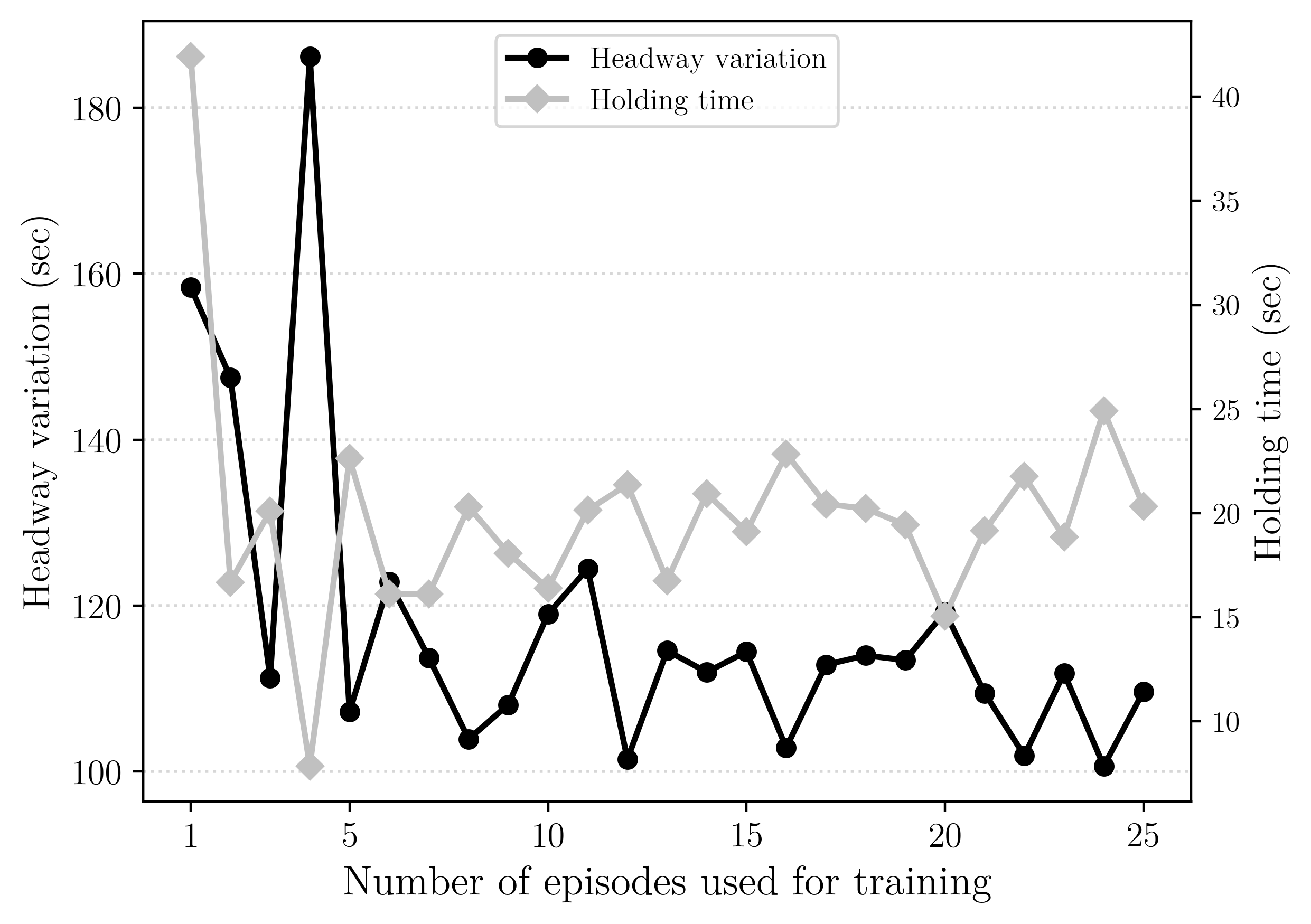}
    \caption{Impacts of training data size on the performance of Offline-GMAC.}
    \label{fig:sa_day}
\end{figure}\par 

Notably, the headway variation largely remains under 120 seconds once the training data size exceeds 5 episodes, while holding time fluctuates between 15-25 seconds. This means that even trained with a smaller dataset containing only a few episodes, Offline-GMAC can achieve a performance comparable to policies trained on much more data. This modest data size requirement is favorable for real-world deployment.

\subsection{Hyperparameter selection}\label{sec:model_selection}

In previous sections, optimal hyperparameters were determined by simulating the resulting policies' performances in the ``real environment''---a common approach in the offline RL literature, as noted by \citet{paine2020hyperparameter}. However, for bus control problems, it's unwelcome to directly evaluate policies in the real environment since underperforming policies could potentially disrupt the service. Therefore, a safer offline assessment approach for hyperparameter selection, reliant solely on historical data, is necessary to identify well-performing policies prior to their real-world implementation.\par

In this section, we conduct offline hyperparameter selection (OHS), meaning to evaluate and pick the best policy from a set of policies trained using different hyperparameters, without interacting with the real environment. (OHS is closely related to offline policy evaluation (OPE), which estimates the value function of a target policy using data collected by a different behavior policy.) We follow the workflow of OHS in \citet{paine2020hyperparameter}:
 \begin{enumerate}
 	\item Train offline RL policies using diverse hyperparameter settings;
 	\item Compute scalar statistics summarizing the performance of each policy, without interacting with the environment;
 	\item Pick the $m$ best policies according to the summary scalar statistics and execute them in the real environment. (For practical implementation, we can execute the $m$ policies daily, select the best one based on performance, and continue to run the best policy over the long term.)
 \end{enumerate}\par
 
In this paper, we leverage the average critic value from the final offline training epoch (refer to Algorithm \ref{algr:training}) as the scalar statistic for policy selection. To assess the effectiveness of the aforementioned workflow, we train a range of policies offline using the same uncontrolled data set in Sections \ref{sec:comparison_strategies}-\ref{sec:pareto_comparison}, each under a distinct hyperparameter configuration. Each policy is then tested through 100 episodes of simulation. (Note again that the usage of the simulator here is only for the assessment purpose.) The average actual return calculated by Eq.~\eqref{eq:actual_return} of each episode is averaged across all 100 episodes.\par

Fig.~\ref{fig:rank_correlation} plots the average critic value of the last offline training epoch against the average actual return. Each data point represents a policy trained using a distinct set of hyperparameters. Note that all the critics tend to underestimate the return. This trend likely arises because the uncontrolled data predominantly cover low-reward regions within the state-action space, causing the critics to be inherently pessimistic. This pessimism is opposite to the over-optimism more commonly reported in offline value-based methods \citep{kumar2020conservative}, where the critic produces inflated estimates on out-of-distribution actions rather than fitting low-reward in-distribution data as in our setting. However, the actual return and the estimated critic value exhibit a positive correlation, with a Spearman's rank correlation coefficient of 0.67 and an associated $p$-value of $2\mathrm{e}-8$.\par

\begin{figure}[!ht]
    \centering
    \includegraphics[scale=0.75]{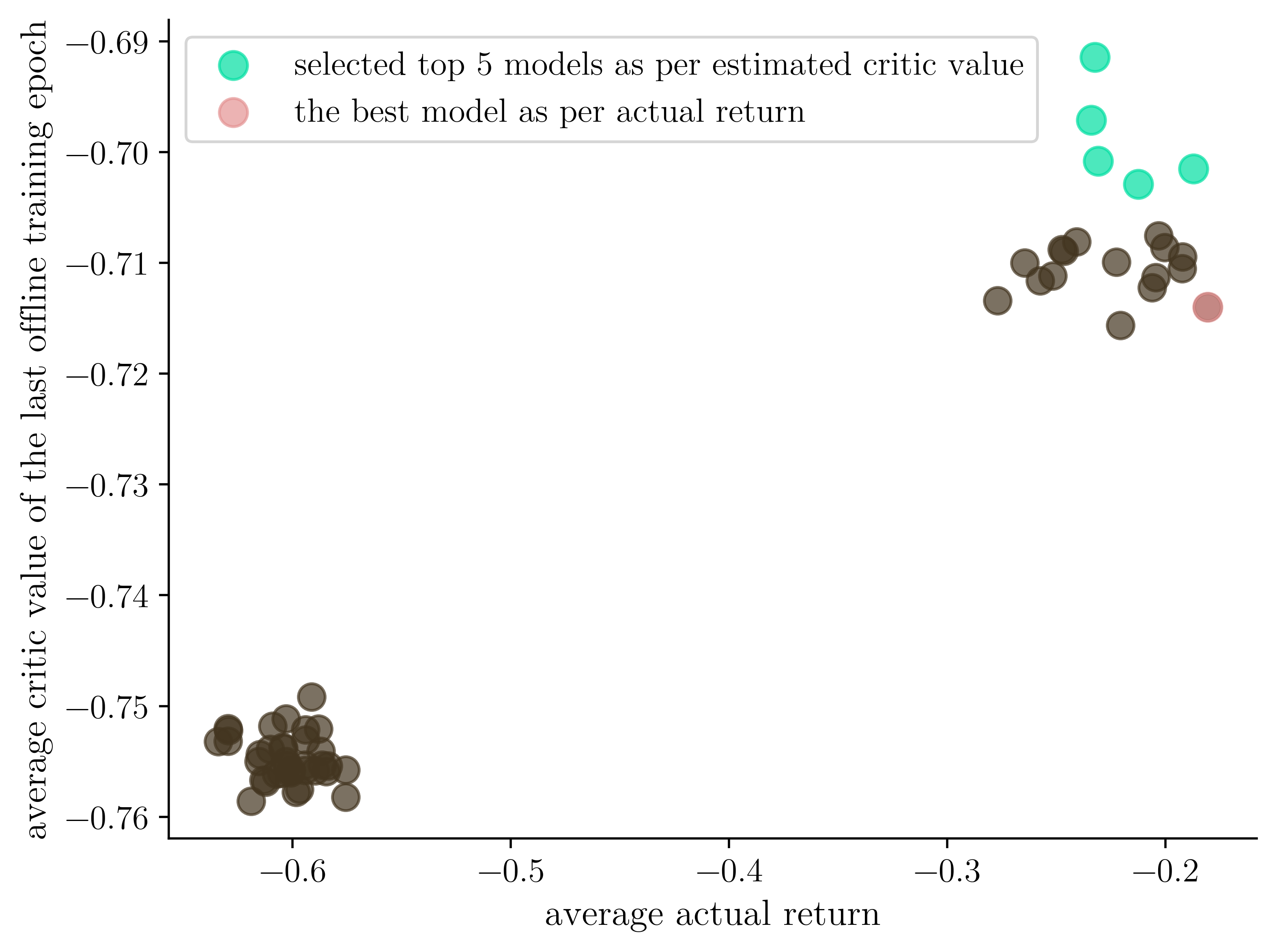}
    \caption{Critic value of the last training epoch versus average actual return.}
    \label{fig:rank_correlation}
\end{figure}\par

Furthermore, the data points are separated into two clear clusters. The points in the lower left corner exhibit very low average actual returns, meaning that agents associated with these hyperparameter configurations fail to learn useful control knowledge from the offline data during training. In contrast, most points clustered in the upper right achieve returns concentrated between -0.25 and -0.18. The clear bifurcation into high versus low return policies demonstrates that, once offline training is complete, the critic value alone can effectively distinguish between good and bad policies.\par

Finally, selecting the top five policies (the green points in Fig.~\ref{fig:rank_correlation}) as per the estimated critic values is sufficient to provide a well-performing policy. While the optimal policy (the red point) generates a headway variation of 96 seconds with an average holding time of 28 seconds, the top five selected policies (ranked by estimated critic value) yield headway variations of 125, 111, 117, 98, and 113 seconds, with average holding times of 15, 18, 17, 28, and 44 seconds, respectively. If the Pareto frontiers under these five policies were plotted in Fig.~\ref{fig:pareto_headway} and compared to Online-GMAC, we would observe that the first four demonstrate better headway regularity than the Online-GMAC for given holding time thresholds.

\section{Conclusions and future research directions}\label{sec:conclusion}
In this paper, we propose a data-driven formulation of multi-agent RL, Offline-GMAC, to learn an optimal bus holding policy from logged uncontrolled bus trajectory data, without interaction with the environment. The method adopts the centralized training and decentralized execution paradigm: the centralized critic leverages GNNs to perform effective credit assignments by observing the states and actions of all agents; and the decentralized actor makes holding decisions based on local observations. The method introduces a novel data augmentation scheme that exploits a key equivalence: stochastic delays in link travel time produce effects identical to those of deliberate holding interventions. The scheme synthesizes holding actions from historical GPS trajectories, transforming uncontrolled data into usable transition tuples for training holding policies.\par

Under the Lipschitz continuity assumption on the underlying system dynamics, we prove that the 1-Wasserstein distance between the distribution of synthesized state transitions and the distribution of true controlled state transitions admits an upper bound. This bound scales with the standard deviation of link travel time plus a tunable term related to the synthesis process itself, all multiplied by the system's Lipschitz constant. The bound's explicit form reveals how synthesis accuracy degrades with increased travel time uncertainty, while also identifying the regulator parameter $\eta$ as a mechanism for mitigating distributional shift.\par

Evaluations using a simulator driven by real-world bus operation data demonstrate that the Offline-GMAC method can learn a policy that matches the performance of its online counterpart. Furthermore, the distribution shift problem, commonly encountered in offline RL, appears to be less problematic when using the synthesized training data.\par

Through extensive numerical experiments, we find our proposed Offline-GMAC method has the following deployment advantages:
\begin{enumerate}[(i)]
	\item By leveraging offline training, we can develop and refine holding policies without the need for online interactions. This reduces the potential risks and disturbances caused by deploying untested policies directly in the real world.
	\item Our method eliminates the need for a simulator used for offline training, mitigating the challenges related to parameter estimation and complex system modeling that often arise when creating a high-fidelity simulator.
	\item Additionally, the data requirement is minimal, as our offline method can generate a dependable holding policy using only a few days' worth of historical data. This enables rapid adaptations to frequent demand and operational changes in the bus route.
	\item Lastly, the offline nature of our method enables fast training of customized holding policies using historical data of different time periods. Distinct policies tailored to each period can be learned within minutes. This computational efficiency further enhances the practicality of our offline method. In contrast, online methods require repeatedly interacting with the environment to collect new data for training the agents, resulting in a time-consuming training process.
\end{enumerate}

There are three potential future research directions. First, a real-world bus corridor is often shared by multiple bus lines, enabling common-line passengers to choose buses among multiple lines \citep{cominetti2001common,schmocker2016bus,laskaris2018multiline}. Future efforts can go to developing more sophisticated multi-agent RL algorithms to consider the interactions among lines. Second, our proposed method can be extended to incorporate other control measures, including stop-skipping, bus speed control \citep{he2019approach}, and signal priority schemes \citep{estrada2016bus,anderson2020effect}. Unlike holding, our travel-time-based synthesis scheme does not apply to these measures, and action-labeled data may need to come from expert demonstrations (e.g., rule-based controllers) or other sources. Finally, we are seeking opportunities for real-world tests through collaboration with the local transit agency.

\section*{Acknowledgements}
This work was supported by the National Natural Science Foundation of China (Project Nos.\ 72201214, 72571220), and the General Research Funds (Project Nos.\ 15217415, 15237624) provided by the Research Grants Council of Hong Kong.

\newpage
\begin{appendices}
\renewcommand{\theHequation}{\Alph{section}.\arabic{equation}}
\renewcommand{\theHfigure}{\Alph{section}.\arabic{figure}}
\renewcommand{\theHlemma}{\Alph{section}.\arabic{lemma}}

\section{Proof of Lemma~\ref{lemma:distance_of_arrivals}}\label{apdx:lemma_distance_of_arrivals_proof}

\setcounter{equation}{0}\renewcommand{\theequation}{A.\arabic{equation}}

Let $\hat{\epsilon}_{k,t}^j \equiv T_{k,t}^j - \mu_k$ and $\bar{\epsilon}_{k,t}^j \equiv Y_{k,t}^j - \mu_k$ be the random variables for the actual and historical travel time deviations, respectively.\par

The 1-Wasserstein distance, $\mathcal{W}_1$, is bounded by the expected distance under the independent coupling of the two random variables:
\begin{equation}
	\mathcal{W}_1(P(\hat{t}_{\text{arr},k\oplus 1}^{j}), P(\bar{t}_{\text{arr},k\oplus 1}^{j})) \le \mathbb{E}[|\hat{t}_{\text{arr},k\oplus 1}^{j} - \bar{t}_{\text{arr},k\oplus 1}^{j}|].
\end{equation}
Let $C_h = \frac{W}{\eta\cdot\sigma_k}$. Substituting the definitions for the arrival times into the right-hand side of this inequality, which we denote by $E_h$, gives:
\begin{equation}
	E_{h} = \mathbb{E}[ | (t_{\text{dep},k}^{j} + h + T_{k,t}^j) - (t_{\text{dep},k}^{j} + Y_{k,t}^j) | ] = \mathbb{E}[ | h + \hat{\epsilon}_{k,t}^j - \bar{\epsilon}_{k,t}^j | ],
\end{equation}
where the expectation is a joint expectation over $\hat{\epsilon}_{k,t}^j$ and $\bar{\epsilon}_{k,t}^j$, conditioned on the event $H_s = h$. From the definition of $H_s$, the condition $H_s = h$ gives us $h = C_h \cdot \max\{0,\bar{\epsilon}_{k,t}^j\}$. Substituting this into the expectation gives:
\begin{equation}
	E_{h} = \mathbb{E}[ | C_h \cdot \max\{0, \bar{\epsilon}_{k,t}^j\} + \hat{\epsilon}_{k,t}^j - \bar{\epsilon}_{k,t}^j | ]. 
\end{equation}
We analyze $E_h$ using the law of total expectation, conditioning on the sign of $\bar{\epsilon}_{k,t}^j$.

\textbf{Case 1: The historical bus travel time is longer than the mean ($\bar{\epsilon}_{k,t}^j > 0$ and $h > 0$).}
The conditional expectation is $\mathbb{E}[|(C_h - 1)\bar{\epsilon}_{k,t}^j + \hat{\epsilon}_{k,t}^j| \mid \bar{\epsilon}_{k,t}^j > 0]$. By the triangle inequality, it is bounded by:
\begin{equation}
	\mathbb{E}[|(C_h - 1)\bar{\epsilon}_{k,t}^j + \hat{\epsilon}_{k,t}^j| \mid \bar{\epsilon}_{k,t}^j > 0] \le \mathbb{E}[|(C_h - 1)\bar{\epsilon}_{k,t}^j| \mid \bar{\epsilon}_{k,t}^j > 0] + \mathbb{E}[|\hat{\epsilon}_{k,t}^j| \mid \bar{\epsilon}_{k,t}^j > 0].
\end{equation}

As proven in Lemma~\ref{lemma:proportionality} (Appendix~\ref{apdx:lemma}), $\mathbb{E}[|\bar{\epsilon}_{k,t}^j| \mid \bar{\epsilon}_{k,t}^j > 0]$ is proportional to $\sigma_k$. Moreover, because $\hat{\epsilon}_{k,t}^j$ is independent of $\bar{\epsilon}_{k,t}^j$, $\mathbb{E}[|\hat{\epsilon}_{k,t}^j| \mid \bar{\epsilon}_{k,t}^j > 0]$ simplifies to the mean absolute deviation (MAD) $\mathbb{E}[|\hat{\epsilon}_{k,t}^j|]$, which is also proportional to $\sigma_k$. Let $\mathbb{E}[|\bar{\epsilon}_{k,t}^j| \mid \bar{\epsilon}_{k,t}^j > 0] = C_1 \sigma_k$ and $\mathbb{E}[|\hat{\epsilon}_{k,t}^j|] = C_{\text{mad}} \sigma_k$, where $C_1$ and $C_{\text{mad}}$ are positive constants derived from the standardized distribution $Z = \epsilon / \sigma_k$. The bound for Case 1 is: $C_1 |C_h - 1| \sigma_k + C_{\text{mad}} \sigma_k$.\par

\textbf{Case 2: The historical bus travel time is shorter than or equal to the mean ($\bar{\epsilon}_{k,t}^j \le 0$ and $h = 0$).}
The conditional expectation is $\mathbb{E}[|\hat{\epsilon}_{k,t}^j - \bar{\epsilon}_{k,t}^j| \mid \bar{\epsilon}_{k,t}^j \le 0]$. Using the triangle inequality, this is bounded by:
\begin{align}
	\mathbb{E}[|\hat{\epsilon}_{k,t}^j - \bar{\epsilon}_{k,t}^j| \mid \bar{\epsilon}_{k,t}^j \le 0] \leq \mathbb{E}[|\hat{\epsilon}_{k,t}^j| + |\bar{\epsilon}_{k,t}^j| \mid \bar{\epsilon}_{k,t}^j \le 0] \nonumber \\
	= \mathbb{E}[|\hat{\epsilon}_{k,t}^j| \mid \bar{\epsilon}_{k,t}^j \le 0] + \mathbb{E}[|\bar{\epsilon}_{k,t}^j| \mid \bar{\epsilon}_{k,t}^j \le 0].
\end{align}\par

Again, due to independence, the first term is $\mathbb{E}[|\hat{\epsilon}_{k,t}^j|] = C_{\text{mad}} \sigma_k$. The second term, $\mathbb{E}[|\bar{\epsilon}_{k,t}^j| \mid \bar{\epsilon}_{k,t}^j \le 0]$, is also proportional to $\sigma_k$, as per Lemma~\ref{lemma:proportionality} (Appendix~\ref{apdx:lemma}). Let this be $C_2 \sigma_k$. The bound for Case 2 is: $C_{\text{mad}} \sigma_k + C_2 \sigma_k = (C_{\text{mad}} + C_2) \sigma_k$.\par

The total expectation $E_h$ is a probability-weighted sum of the bounds from the two cases. Let $p_1 = \mathbb{P}(\bar{\epsilon}_{k,t}^j > 0)$ and $p_2 = \mathbb{P}(\bar{\epsilon}_{k,t}^j \le 0)$. These probabilities are constants that do not depend on $\sigma_k$.
\begin{align}
	E_h & \le p_1 \cdot ( C_1 |W/\eta - \sigma_k| + C_{\text{mad}} \sigma_k ) + p_2 \cdot ( (C_{\text{mad}} + C_2) \sigma_k ) \\
	&= (p_1 C_1) |W/\eta - \sigma_k| + (C_{\text{mad}} + p_2 C_2) \sigma_k.
\end{align}\par

Since $p_1, p_2, C_1, C_2, C_{\text{mad}}$ are all positive constants, the entire expression is bounded by a term of the order $\mathcal{O}(|W/\eta - \sigma_k| + \sigma_k)$.

\section{Proof of Theorem~\ref{theorem:push_forward}}\label{apdx:theorem_push_forward}

\setcounter{equation}{0}\renewcommand{\theequation}{B.\arabic{equation}}

By the dual formulation of $\mathcal{W}_1$ in Eq.~\eqref{eq:dual_formulation}, for all $P_1, P_2 \in \mathcal{P}_1(\mathcal{Y})$:
		\begin{equation}\label{eq:to_analyze}
			\mathcal{W}_1(K_\#P_1, K_\#P_2) = \sup_{\|\psi\|_{\text{Lip}} \le 1} \left\{  \int_{\mathcal{X}} \psi(x) \, d(K_\#P_1)(x) - \int_{\mathcal{X}} \psi(x) \, d(K_\#P_2)(x)  \right\}.
		\end{equation}
		According to the definition of the the mixture distribution $K_\#P$ in Eq.~\eqref{eq:mix_definition} and applying Fubini's theorem, we have:
		\begin{equation}
			\int_{\mathcal{X}} \psi(x) \, d(K_\#P)(x) = \int_{\mathcal{Y}} \left( \int_{\mathcal{X}} \psi(x) \, dK(y)(x) \right) \, dP(y), \|\psi\|_{\text{Lip}} \le 1.
		\end{equation}
		Define a new function $h_{\psi}: \mathcal{Y} \to \mathbb{R}$ as the inner integral, which depends on our chosen $\psi$:
		\begin{equation}
			h_{\psi}(y) \equiv \int_{\mathcal{X}} \psi(x) \, dK(y)(x), \|\psi\|_{\text{Lip}} \le 1.
		\end{equation}
		
		Substituting this $h_{\psi}(y)$ back into Eq.~\eqref{eq:to_analyze}, we have:
		\begin{equation}
			\mathcal{W}_1(K_\#P_1, K_\#P_2) = \sup_{\|\psi\|_{\text{Lip}} \le 1} \left\{ \int_{\mathcal{Y}} h_{\psi}(y) \, dP_1(y) - \int_{\mathcal{Y}} h_{\psi}(y) \, dP_2(y) \right\}.
		\end{equation}
		
		We next determine the Lipschitz constant of $h_{\psi}(y)$ where $\|\psi\|_{\text{Lip}} \le 1$. For any $y_1$ and $y_2$,
		\begin{align}
			&|h_{\psi}(y_1) - h_{\psi}(y_2)| = \left| \int_{\mathcal{X}} \psi(x) \, dK(y_1)(x) - \int_{\mathcal{X}} \psi(x) \, dK(y_2)(x) \right| \nonumber \\
			&\leq \sup_{\|\psi\|_{\text{Lip}} \le 1} \left\{ \left| \int_{\mathcal{X}} \psi(x) \, d\mu(x) - \int_{\mathcal{X}} \psi(x) \, d\nu(x) \right| \right\} = \mathcal{W}_1(K(y_1), K(y_2)).
		\end{align}
		
		Invoking the theorem's premise \eqref{eq:kernel_assumption}, we get:
		\begin{equation}
			|h_{\psi}(y_1) - h_{\psi}(y_2)| \le L_s \cdot d_{\mathcal{Y}}(y_1, y_2),
		\end{equation}
		which shows that for any 1-Lipschitz $\psi$, the resulting function $h_{\psi}(y)$ is $L_s$-Lipschitz.
		
		Since $h_{\psi}(y)$ is $L_s$-Lipschitz, a ``scaled'' function $\tilde{h}_{\psi}(y) \equiv h_{\psi}(y) / L_s$ is 1-Lipschitz. (We assume $L_s > 0$; if $L_s=0$, the proof is trivial.) We substitute $h_{\psi}(y) = L_s \cdot \tilde{h}_{\psi}(y)$ and pull the constant $L_s$ out of the supremum:
		\begin{align}\label{eq:scaled_supremum}
			\mathcal{W}_1(K_\#P_1, K_\#P_2) &= \sup_{\|\psi\|_{\text{Lip}} \le 1} \left\{ \int (L_s \cdot \tilde{h}_{\psi}(y)) \, dP_1(y) - \int (L_s \cdot \tilde{h}_{\psi}(y)) \, dP_2(y) \right\} \nonumber \\
			&= L_s \cdot \sup_{\|\psi\|_{\text{Lip}} \le 1} \left\{ \int \tilde{h}_{\psi}(y) \, dP_1(y) - \int \tilde{h}_{\psi}(y) \, dP_2(y) \right\}.
		\end{align}
		Let $\tilde{H} = \{ \tilde{h}_{\psi} \mid \|\psi\|_{\text{Lip}} \le 1 \}$ be the set of all our scaled functions. $\tilde{H}$ is a set of 1-Lipschitz functions, which is a subset of the set of all 1-Lipschitz functions on $\mathcal{Y}$ (let us call this full set $\tilde{\Phi}$). The supremum over a subset ($\tilde{H}$) can be no larger than the supremum over the full, containing set ($\tilde{\Phi}$). Therefore:
		\begin{equation}\label{eq:two_supremum}
			\sup_{\tilde{h}_{\psi} \in \tilde{H}} \left\{ \int \tilde{h}_{\psi}(y) \, dP_1(y) - \int \tilde{h}_{\psi}(y) \, dP_2(y) \right\} \le \sup_{\tilde{\phi} \in \tilde{\Phi}} \left\{ \int \tilde{\phi}(y) \, dP_1(y) - \int \tilde{\phi}(y) \, dP_2(y) \right\}.
		\end{equation}
		The expression on the right is, by the dual definition \eqref{eq:dual_formulation}, the 1-Wasserstein distance $\mathcal{W}_1(P_1, P_2)$. By combining the equality from Eq.~\eqref{eq:scaled_supremum} with the inequality from Eq.~\eqref{eq:two_supremum}, we get our final result:
		\begin{equation}
			\mathcal{W}_1(K_\#P_1, K_\#P_2) \le L_s \cdot \mathcal{W}_1(P_1, P_2).
		\end{equation}

\section{A lemma on a scaling property of conditional expectation}\label{apdx:lemma}

\setcounter{equation}{0}\renewcommand{\theequation}{C.\arabic{equation}}
\setcounter{lemma}{0}\renewcommand{\thelemma}{C.\arabic{lemma}}

\begin{lemma}\label{lemma:proportionality}
Let $X$ be a continuous random variable with mean $\mathbb{E}[X] = 0$ and variance $\text{Var}(X) = \sigma^2$. We assume the standard deviation is finite and positive, i.e., $0 < \sigma < \infty$. Then, there exists a constant $C > 0$ that does not depend on $\sigma$, such that:
\begin{equation}
	\mathbb{E}[|X| \mid X > 0] = C \cdot \sigma.
\end{equation}

\end{lemma}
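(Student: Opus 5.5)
The plan is to reduce everything to the standardized variable $Z \equiv X/\sigma$, which has mean $0$ and unit variance. The constant $C$ will then be expressed purely in terms of the law of $Z$. This matches how the lemma is used in Appendix~\ref{apdx:lemma_distance_of_arrivals_proof}, where $C_1$, $C_2$ and $C_{\text{mad}}$ are said to be "derived from the standardized distribution $Z=\epsilon/\sigma_k$". I will therefore read "does not depend on $\sigma$" as: within a scale family, varying $\sigma$ leaves the law of $Z$ fixed. For a single arbitrary law the statement would otherwise be vacuous. This interpretive point is the real content of the lemma, and I will state it explicitly at the start of the proof.

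\textbf{Step 1: the conditioning event has positive probability.} I will show $\mathbb{P}(X>0)>0$. Suppose instead that $X\le 0$ almost surely. Then $\mathbb{E}[X]=0$ forces $X=0$ almost surely, which contradicts $\sigma>0$. So the conditional expectation is well defined. Since $\{X>0\}=\{Z>0\}$, we also have $\mathbb{P}(X>0)=\mathbb{P}(Z>0)=:q\in(0,1]$.

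\textbf{Step 2: rewrite and factor out $\sigma$.} By the definition of conditional expectation on an event of positive probability,
\begin{equation*}
\mathbb{E}[|X|\mid X>0]=\frac{\mathbb{E}[X\,\mathbf{1}\{X>0\}]}{\mathbb{P}(X>0)}=\sigma\cdot\frac{\mathbb{E}[Z\,\mathbf{1}\{Z>0\}]}{\mathbb{P}(Z>0)}.
\end{equation*}
Set $C \equiv \mathbb{E}[Z\,\mathbf{1}\{Z>0\}]/q$. By construction $C$ depends only on the law of $Z$, not on $\sigma$.

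\textbf{Step 3: $C$ is finite and strictly positive.} For finiteness, Jensen's inequality gives $\mathbb{E}[Z\mathbf{1}\{Z>0\}]\le \mathbb{E}|Z|\le (\mathbb{E}Z^2)^{1/2}=1$, so $C\le 1/q<\infty$. For positivity, $Z\mathbf{1}\{Z>0\}$ is nonnegative and strictly positive on an event of probability $q>0$, so its expectation is strictly positive. In fact, $\mathbb{E}[Z]=0$ gives $\mathbb{E}[Z\mathbf{1}\{Z>0\}]=\tfrac12\mathbb{E}|Z|>0$.

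\textbf{Companion case.} The same argument gives the constant for $\mathbb{E}[|X|\mid X\le 0]$ used in Case~2 of Appendix~\ref{apdx:lemma_distance_of_arrivals_proof}. Continuity of $X$ gives $\mathbb{P}(X=0)=0$. The argument of Step~1 applied to $-X$ gives $\mathbb{P}(X<0)>0$, so $\mathbb{P}(X\le 0)>0$. The mean-absolute-deviation constant $C_{\text{mad}}=\mathbb{E}|Z|$ is handled identically.

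The steps themselves are routine; the main obstacle is the $\sigma$-independence claim, which requires the scale-family reading made explicit at the outset.
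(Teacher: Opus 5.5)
Your proposal is correct and follows the paper's proof. You standardize via $Z = X/\sigma$, factor out $\sigma$, define $C = \mathbb{E}[Z\,\mathbf{1}\{Z>0\}]/\mathbb{P}(Z>0)$, and verify it is finite and strictly positive. The differences are only in presentation: you work with indicator expectations rather than a change of variables in the density $f_Z$, which also avoids needing a density at all, and you state explicitly the scale-family reading of ``does not depend on $\sigma$'' that the paper uses implicitly when it says $f_Z$ ``does not contain $\sigma$ as a parameter.''
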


\begin{proof}
	Let $f_X(x)$ be the probability density function (PDF) of $X$. By the definition of conditional expectation for a continuous variable, we have:
	\begin{equation}\label{eq:definition_conditional_expectation}
		\mathbb{E}[|X| \mid X > 0] = \frac{\int_{0}^{\infty} |x| f_X(x) dx}{\mathbb{P}(X > 0)} = \frac{\int_{0}^{\infty} x f_X(x) dx}{\int_{0}^{\infty} f_X(x) dx}.
	\end{equation}
	We define a standardized random variable $Z = X / \sigma$. We find the relationship between $f_X(x)$, and the PDF of $Z$, $f_Z(z)$ as follows:
	\begin{equation}
		f_Z(z) = \frac{d}{dz} F_Z(z) = \frac{d}{dz} F_X(z\sigma) = f_X(z\sigma) \cdot \sigma.
	\end{equation}
	For any $x$, let $z = x/\sigma$. We thus have $f_X(x) = \frac{1}{\sigma} f_Z(x / \sigma)$. Substituting this expression for $f_X(x)$ into both the numerator and denominator of Eq.~\eqref{eq:definition_conditional_expectation}, we obtain:
	\begin{equation}\label{eq:condional_expectation_frac_origin_form}
		\mathbb{E}[|X| \mid X > 0] = \frac{\int_{0}^{\infty} x \left( \frac{1}{\sigma} f_Z(x / \sigma) \right) dx}{\int_{0}^{\infty} \left( \frac{1}{\sigma} f_Z(x / \sigma) \right) dx}.
	\end{equation}
	Since $z = x/\sigma$, we have $x = z\sigma$ and $dx = \sigma dz$. Substituting these relationships into Eq.~\eqref{eq:condional_expectation_frac_origin_form} yields:
	\begin{equation}\label{eq:condional_expectation_frac_final_form}
		\mathbb{E}[|X| \mid X > 0] = \frac{\int_{z=0}^{\infty} (z\sigma) \left( \frac{1}{\sigma} f_Z(z) \right) (\sigma dz)}{\int_{z=0}^{\infty} \left( \frac{1}{\sigma} f_Z(z) \right) (\sigma dz)} = \frac{\sigma \int_{0}^{\infty} z f_Z(z) dz}{\int_{0}^{\infty} f_Z(z) dz}.
	\end{equation}
	We define the constant $C\equiv \frac{\int_{0}^{\infty} z f_Z(z) dz}{\int_{0}^{\infty} f_Z(z) dz}$.
	
	The variable $Z = X / \sigma$ has variance $\text{Var}(Z) = \text{Var}(X / \sigma) = \frac{1}{\sigma^2}\text{Var}(X) = \frac{1}{\sigma^2}\sigma^2 = 1$. Here, $f_Z(z)$ is the PDF of a distribution with variance 1 and does not contain $\sigma$ as a parameter. Since $C$ is computed entirely from $f_Z(z)$, $C$ is a constant that does not depend on $\sigma$.
	
	The denominator of $C$ is $\int_{0}^{\infty} f_Z(z) dz = \mathbb{P}(Z > 0)$. Since $\mathbb{E}[Z] = 0$ and $\text{Var}(Z) = 1 > 0$, $Z$ is non-degenerate and must take on positive values to have a zero mean, so $0 < \mathbb{P}(Z > 0) \le 1$. The denominator is thus finite and strictly positive. The numerator of $C$, $\int_{0}^{\infty} z f_Z(z) dz$, is a component of the first absolute moment, $\mathbb{E}[|Z|]$. $\text{Var}(Z)=1$ implies that $\mathbb{E}[|Z|]$ is also finite, so the numerator is finite. Furthermore, since $\mathbb{P}(Z > 0) > 0$, $f_Z(z)$ must be positive on some interval in $(0, \infty)$, which ensures the numerator is also strictly positive. As $C$ is the ratio of two finite positive constants, $C$ is itself a finite and positive constant.
\end{proof}

\section{Estimating passenger arrival rates for Route No.~34}\label{apdx:arrival_rate_estimation}

\setcounter{equation}{0}\renewcommand{\theequation}{D.\arabic{equation}}

To calibrate passenger arrival rate for a stop $k$, $\lambda_k$, we utilize a historical data set of GPS trajectories spanning a week (March 11-17, 2021). The arrival and departure times for every bus $b_j$ at stop $c_k$, denoted by $t_{\text{arr},k}^j$ and $t_{\text{dep},k}^j$, respectively, are extracted. The $\lambda_k$ is estimated using the following equation:
\begin{equation}\label{eq:demand_estimation}
	t_{\text{dep},k}^j - t_{\text{arr},k}^j = \frac{\lambda_k}{\omega_k} (t_{\text{dep},k}^j - t^{j\ominus 1}_{\text{dep},k} ),
\end{equation}
where $t_{\text{dep},k}^j - t_{\text{arr},k}^j$ is bus $b_j$'s dwell time at stop $c_k$, and $t_{\text{dep},k}^j - t_{\text{dep},k}^{j\ominus 1}$ is the forward (departure) headway between buses $b_j$ and $b_{j\ominus 1}$. The underlying assumptions are: (1) passengers arriving at stop $c_k$ during the forward headway will board bus $b_j$ at a rate of $\omega_k$; and (2) bus $b_j$'s dwell time is dictated by boarding passengers since alighting moves are quick \citep{daganzo2009headway}. Note that the latter assumption is only employed in Eq.~\eqref{eq:demand_estimation} for the parameter estimation purpose, not in the simulator.\par

In the above equation, all variables excluding $\frac{\lambda_k}{\omega_k}$ are known. Thus, we use linear regression to estimate $\frac{\lambda_k}{\omega_k}$, which is subsequently used to generate dwell times associated with forward headways. $\omega_k$'s are fixed at 0.5 passengers/sec, as suggested by \citet{luo2021joint}.

\section{Determining \texorpdfstring{$\bar{H}$}{H-bar} in the FHC strategy}\label{apdx:deter_psi}

\setcounter{equation}{0}\renewcommand{\theequation}{E.\arabic{equation}}

We calculate $\bar{H}$ by solving the following equation:

\begin{equation}
	\sum_{l_k\in L}{\mathbb{E}[Y_k]} + \sum_{c_k\in C}{\mathbb{E}[\bar{H}\cdot\max\{\Lambda(\lambda_k), \Lambda(\sum\nolimits_{j}{\rho_{jk}\lambda_j}) \} ]} + \sum_{c_k\in C}{\mathbb{E}[H_k]} = M\cdot \bar{H},
\end{equation}
where the left-hand side (LHS) represents the average trip time for a bus to complete one loop of its service route. In our simulation, we disregard any layover time at the terminal stop. This implies that the bus promptly commences a new cycle of service as soon as it arrives at the terminal stop.\par

It consists of three parts. The first part is the total expected link travel time, $\sum_{l_k\in L}{\mathbb{E}[Y_k]}$, which can be directly obtained from a travel time distribution calibrated by historical data. The second part is the total expected dwell time, where $\Lambda(x)$ represents a Poisson random variable with a mean of $x$. Given $\lambda_k$'s and $\rho_{jk}$'s, the expectation of the maximum of $\Lambda(\lambda_k)$ and $\Lambda(\sum\nolimits_{j}{\rho_{jk}\lambda_j})$ is determined numerically using the Monte Carlo method. To compute the third part, the total expected holding time $\sum_{c_k\in C}{\mathbb{E}[H_k]}$, we first set $\sum_{c_k\in C}{\mathbb{E}[H_k]}$ to 0 to obtain an initial $\bar{H}$. Subsequently, we execute the FHC strategy, as defined in Eq.~\eqref{eq:Daganzo2009}, using the initial $\bar{H}$ in the simulation. The simulated holding times are then used to calculate $\sum_{c_k\in C}{\mathbb{E}[H_k]}$. The updated $\sum_{c_k\in C}{\mathbb{E}[H_k]}$ is further used to compute a new $\bar{H}$, and this iterative process continues until $\bar{H}$ converges.

\section{\texorpdfstring{Visualization of bus trajectories under the\\Offline-GMAC control}{Visualization of bus trajectories under the Offline-GMAC control}}\label{apdx:ts_visualization}

\setcounter{figure}{0}
\renewcommand{\thefigure}{F.\arabic{figure}}

Figs.~\ref{fig:visual_ts}a and b illustrate time-space trajectories of buses under two scenarios: without control and with a trained Offline-GMAC policy, for a representative episode. The color of each trajectory indicates the bus occupancy ratio (number of onboard passengers divided by bus capacity). In the uncontrolled scenario (Fig.~\ref{fig:visual_ts}a), severe bus bunching is evident, as buses struggle to maintain consistent headways throughout the day. In contrast, our trained Offline-GMAC policies, which utilize only historical uncontrolled bus trajectory data, successfully maintain uniform headways and prevent bus bunching phenomena across all time periods.

\begin{figure}[!ht]
  \centering
  \subfloat[]{%
       \includegraphics[width=0.45\textwidth]{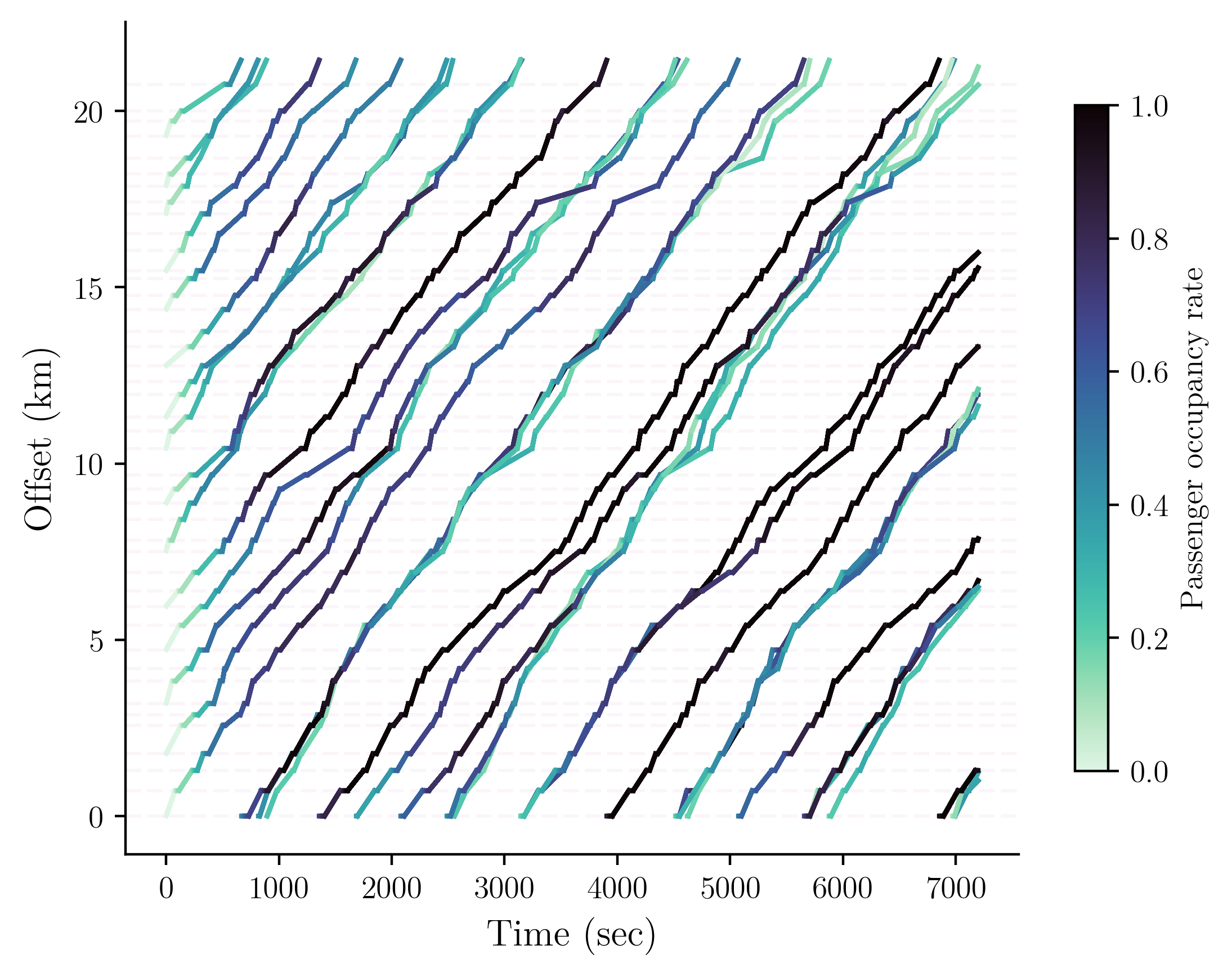}}
  \subfloat[]{%
        \includegraphics[width=0.45\textwidth]{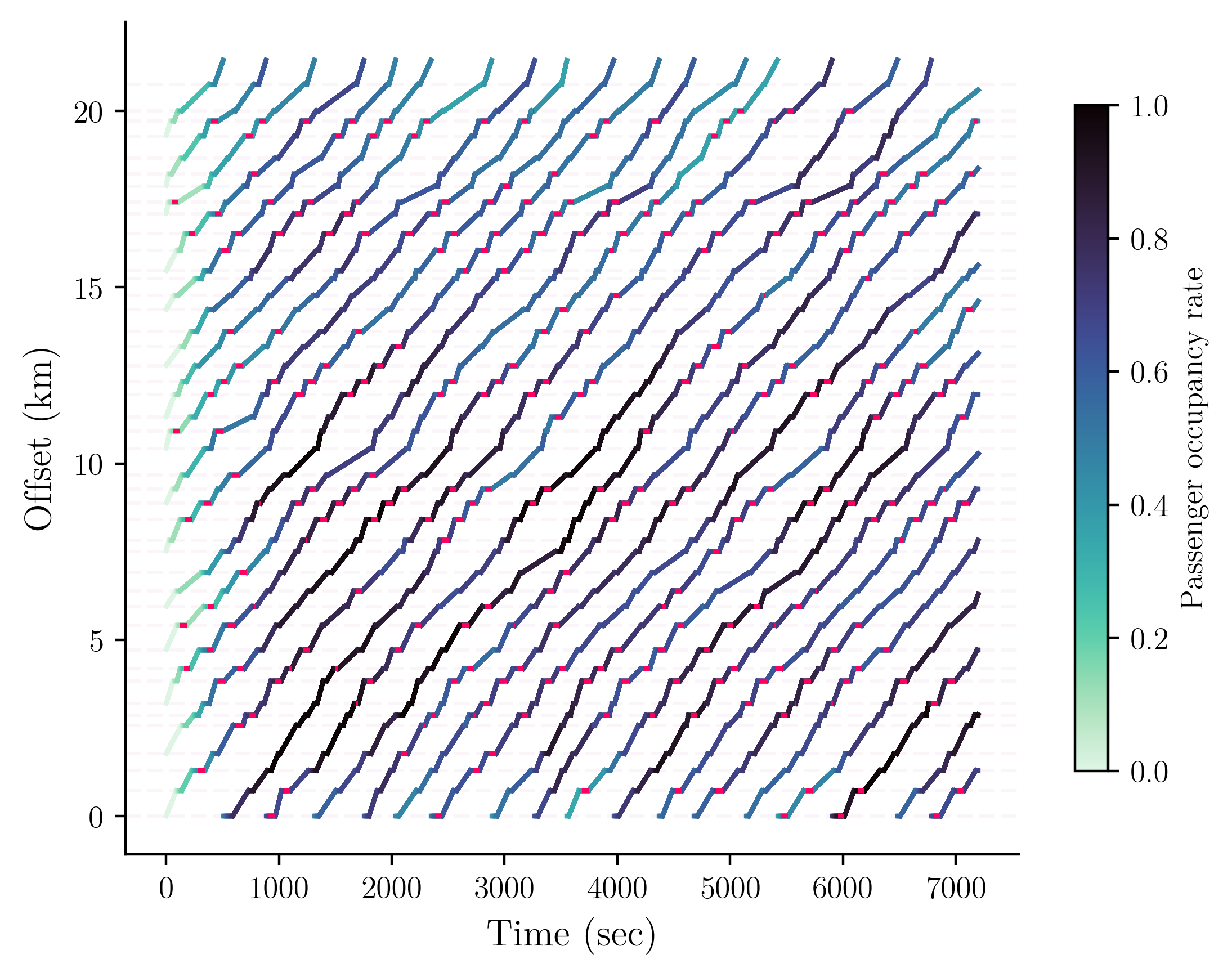}}
  \caption{Comparison of bus time-space trajectories: (a) Without control interventions. (b) Regulated by the proposed Offline-GMAC control policy.}
  \label{fig:visual_ts}
\end{figure}

\end{appendices}

\begin{footnotesize}
\bibliographystyle{informs2014trsc.bst}
\bibliography{literature.bib}
\end{footnotesize}
\end{document}